\documentclass{article}

\usepackage[preprint, nonatbib]{neurips_2023}

\usepackage[utf8]{inputenc}
\usepackage[T1]{fontenc}

\usepackage{hyperref}
\usepackage{url}
\usepackage{graphicx}
\usepackage{float}
\usepackage{booktabs}
\usepackage{multirow}

\usepackage{tabularx}
\usepackage{nicefrac}
\usepackage{microtype}
\usepackage{ragged2e}

\usepackage{amsmath}
\usepackage{amsfonts}
\usepackage{mathtools}
\usepackage{amsthm}
\newtheorem{proposition}{Proposition}
\newtheorem{remark}{Remark}

\usepackage{xcolor}
\definecolor{mint}{HTML}{00FFFF}
\definecolor{magenta}{HTML}{ff00ff}
\definecolor{green}{HTML}{00ff00}
\usepackage{newunicodechar}
\newunicodechar{−}{-}

\usepackage{enumitem}
\usepackage{caption}
\usepackage{subcaption}
\usepackage{natbib}

\usepackage{tikz}
\usetikzlibrary{positioning}

\usepackage{listings}
\usepackage{tcolorbox}
\usepackage{algorithm}
\usepackage{algpseudocode}

\usepackage{cleveref}
\usepackage{textcomp}
\usepackage{etoolbox}

\usepackage{booktabs}
\usepackage{multirow}
\usepackage[table]{xcolor}

\definecolor{codegray}{rgb}{0.5,0.5,0.5}
\definecolor{codepurple}{rgb}{0.58,0,0.82}
\definecolor{backcolour}{rgb}{0.95,0.95,0.92}

\lstdefinestyle{ mystyle}{
    backgroundcolor=\color{backcolour},   
    commentstyle=\color{codegray},
    keywordstyle=\color{magenta},
    numberstyle=\tiny\color{codegray},
    stringstyle=\color{codepurple},
    basicstyle=\ttfamily\footnotesize,
    breakatwhitespace=false,         
    breaklines=true,                 
    captionpos=b,                    
    keepspaces=true,                 
    numbers=left,                    
    numbersep=5pt,                  
    showspaces=false,                
    showstringspaces=false,
    showtabs=false,                  
    tabsize=2
}
\pretocmd{\section}{\ifnum\value{section}>2 \clearpage\fi}{}{}

\title{SHIFT: Sigmoid-Based Heuristic Invertible Fitness-Landscape Transformation \\for Accelerating SBST\thanks{%
       Code and data available at: \url{https://github.com/Jeong-jin-Han/SHIFT-SBST.git}}}

\vspace{-2.5mm}
\author{
  Jeongjin Han$^{1}$ \quad
  Seunghoon Sim$^{1}$ \quad
  Jian Lee$^{1}$ \quad
  Seongyoon Park$^{1}$ \\[4pt]
  $^{1}$School of Computing, KAIST \\
  \texttt{hjj22@kaist.ac.kr}
}

\begin{document}
\maketitle

\vspace{-2.5mm}

\begin{abstract}
Search-Based Software Testing (SBST) automates test input generation but is frequently hindered by challenging fitness landscapes characterized by numerous deceptive local optima that impede search progress, as well as extended plateaus where informative fitness signals are scarce. To address this bottleneck, we propose SHIFT (Sigmoid-Based Heuristic Invertible Fitness-Landscape Transformation for Accelerating SBST), a method designed to compress local landscapes and facilitate escape from stagnant regions without altering global semantics. By systematically contracting dense regions where search points cluster, the approach preserves mapping invertibility while enabling optimization algorithms to traverse more effectively toward global coverage with the same step size. When evaluated against established baselines—pure hill climbing and genetic algorithms—under a normalized experimental protocol, the proposed technique yields consistent improvements in convergence speed and search efficiency. These results demonstrate that sigmoid compression constitutes a lightweight yet effective mechanism for achieving more reliable coverage discovery in complex testing environments.

\textbf{Keywords:} Search-Based Software Testing (SBST), Fitness Landscape Analysis, Sigmoid Compression, Local Optima, Automated Test Generation
\end{abstract}
\section{Introduction}

Software testing remains one of the most resource-intensive tasks in software engineering, particularly when manual test design must account for complex input spaces and diverse program behaviors. Search-Based Software Testing (SBST) aims to alleviate this burden by framing test generation as an optimization problem, employing metaheuristics to search for inputs that trigger new coverage automatically. Among these techniques, Hill Climbing (HC) is widely used due to its simplicity, low computational overhead, and strong performance on smooth fitness landscapes.

However, the efficacy of hill climbing degrades significantly when the underlying search space is rugged or full of plateaus. Real-world software often produces fitness landscapes characterized by cliffs, deceptive plateaus, and clusters of local optima, especially around branch conditions involving complex logic, multi-variable interactions, or discontinuous boundary checks. Once the search falls into one of these regions, local fitness updates vanish, and the algorithm repeatedly revisits neighbors that yield no meaningful progress. In practice, this results in prolonged stagnation phases, wasted evaluation budgets, and an overall failure to reach deeper, hard-to-cover branches.

These limitations expose a structural mismatch: hill climbing is inherently myopic and performs best on landscapes where unproductive local regions are sparse and global transitions are accessible. Rather than attempting to render the hill climbing algorithm more sophisticated through complex operator variations or surrogate fitness estimators, we instead adopt a more fundamental approach: directly reshaping the geometry of the search space itself. This paper introduces \emph{SHIFT} (Sigmoid-Based Heuristic Invertible Fitness-Landscape Transformation for Accelerating SBST), a framework that compresses the fitness landscape to address this mismatch. By modifying the landscape geometry, we enable standard hill climbing to operate more effectively using its existing mechanics, without altering the global semantics of the test input domain.

Compressing local regions using a smooth, invertible sigmoid transformation serves to narrow flat basins, reduce the dominance of small-scale local optima, and "stretch" transitions toward more promising areas. Consequently, the search is less likely to become trapped early and obtains a clearer gradient toward global coverage-relevant zones. Unlike discrete mutation adjustments, this transformation uniformly reshapes the region around the current search point, enabling more meaningful exploratory steps without significantly increasing computational overhead.

The remainder of this paper presents the motivation for the SHIFT model, describes the mathematical design of the sigmoid-based compression that forms the core of our approach, and evaluates its impact on coverage efficiency by comparing its performance against standard hill climbing and a genetic algorithm as baselines. The experimental results demonstrate that compressing the solution space constitutes a lightweight, practical, and effective strategy for recovering hill climbing’s effectiveness in rugged SBST scenarios.

This paper makes the following contributions.
We propose SHIFT, an invertible sigmoid-based fitness-landscape transformation that compresses flat and near-flat basins while maintaining a strict one-to-one correspondence with the original search space.
Unlike prior smoothing approaches that alter the global structure of the landscape, SHIFT operates through a provably bijective coordinate change: we show formally that the transformation cannot displace or duplicate any global optimum (Proposition~\ref{prop:global-opt}), providing a theoretical foundation that most heuristic landscape-shaping methods lack.
To realize SHIFT in practice, we further introduce a bidirectional basin detection algorithm and a dimension-aware compression manager that identify and reshape the search space adaptively across independent input dimensions, reducing the effective per-iteration cost from $O(n)$ to $O(k)$ over the $k \ll n$ remaining active dimensions (Section~\ref{sec:compression}).
Finally, we present a comprehensive empirical evaluation of HC-SHIFT against standard hill climbing and a genetic algorithm on a suite of synthetic and real SBST benchmarks under identical time budgets, demonstrating consistent gains in branch coverage, convergence speed, and robustness across a broad range of structurally challenging fitness landscapes (Section~\ref{sec:experiment}).

\newpage
\section{Related Works}

Research in Search-Based Software Testing (SBST) has explored various strategies to mitigate the difficulties posed by rugged or deceptive fitness landscapes. Although prior work does not provide an invertible transformation directly comparable to our SHIFT, several studies attempt to reshape or reinterpret the landscape in ways that improve search effectiveness.

A notable example is the \textit{surF} methodology proposed by \cite{manzoni2020surfing}, which applies a discrete Fourier transform and filters out high-frequency components, such as sharp spikes, to smooth the fitness landscape. By suppressing these high-frequency signals, \textit{surF} constructs a smoother surrogate landscape while preserving the location of the global optimum. This approach demonstrates that smoothing or reshaping the evaluation function can make local search algorithms more effective, even though the surrogate itself is not strictly invertible back to the original domain. Their work highlights that modifying the landscape can reduce the number of local optima and help search algorithms escape deceptive regions, supporting the conceptual motivation behind our compressive mapping.

Comparative studies between search heuristics in SBST also shed light on why hill climbing benefits from such landscape shaping. Earlier foundational work by \cite{harman2010theoretical} compared hill climbing with multi-population GAs for structural test generation. They identified problem classes where GAs outperform HC, particularly when the fitness structure resembles a “Royal Road” landscape with decomposable building blocks, where the crossover operator can significantly benefit. However, they also showed that HC can be competitive or even superior when the landscape is fragmented or discontinuous. This further motivates the need to reshape the landscape for HC rather than rely solely on algorithmic complexity.

Tabu Search has also been applied to test-case generation, primarily as a 
mechanism to avoid cycling and escape shallow local optima. \cite{ma2022scalable} demonstrate a tabu-based path exploration strategy that maintains a memory of recently visited test inputs to prevent re-evaluating unproductive regions. While their approach does not perform dimension-wise tabu operations—which aligns more closely with our compression model—it illustrates that memory-based mechanisms can help push search beyond local traps. However, the tabu list in existing SBST work is applied to complete test inputs rather than individual parameter dimensions, and the literature presents no known instance of tabu applied at the per-axis granularity. This absence highlights a gap that our SHIFT model partially addresses: instead of forbidding individual points or parameter values, our method geometrically reshapes the search space to make problematic regions less dominant and reduce the need for explicit 
memory.

Overall, prior work converges on a common theme: when the fitness landscape is rugged or poorly structured, algorithmic improvements alone seldom yield substantial gains. Whether through Fourier smoothing, empirical comparisons of HC and GA, or tabu-based memory mechanisms, the literature consistently points to landscape structure as the core limiting factor. Our work builds on this insight by providing a direct, invertible transformation that compresses local regions and exposes broader global transitions, enabling simple hill climbing to recover its effectiveness without reliance on heavy evolutionary machinery or complex hyperparameter tuning.

\newpage



\section{Methodology}
\label{sec:approach}

\subsection{Terminology and the Proposed SHIFT Framework}

For clarity, we introduce the terminology used throughout this paper when 
contrasting our approach with baseline search algorithms. At the core of our 
method lies \textbf{SHIFT}, a sigmoid-based, invertible transformation framework 
designed to reshape the fitness landscape encountered in SBST. SHIFT applies 
axis-aligned warpings that compress locally flat or weakly varying basins—regions 
in which classical local search becomes trapped—while preserving a 
one-to-one correspondence between the original and transformed search spaces. 
Because these transformations remain lightweight and invertible, the framework 
can be integrated into existing metaheuristics without altering their algorithmic 
structure.

When combined with standard hill climbing, SHIFT yields the proposed method 
\textbf{HC-SHIFT}. The underlying hill-climbing procedure is preserved, but the 
added transformation enables the search to escape local minima in a controlled 
manner by selectively compressing basins along dimensions identified as relevant 
to fitness improvement.

For comparison, we refer to two baselines: \textbf{HC}, denoting the 
unmodified hill-climbing algorithm, and \textbf{GA}, denoting the genetic 
algorithm without any landscape transformation. Throughout the remainder of this 
report, these terms are used consistently across tables, figures, and 
experimental discussions; unless stated otherwise, HC and GA indicate the 
non-transformed baselines, while HC-SHIFT refers to our proposed approach.

\subsection{Program Instrumentation and Fitness Evaluation}
\label{sec:instrumentation}

Our framework follows the standard Search-Based Software Testing (SBST) 
pipeline: (1) instrument the program under test (PUT), (2) collect branch 
execution information, and (3) compute fitness for each input candidate.  
Instrumentation is performed via a Python source-to-source transformation that 
injects branch probes directly into the Abstract Syntax Tree (AST) of the PUT. 
All instrumentation and fitness logic is implemented in \texttt{sbst\_core.py}.

\paragraph{AST-based Instrumentation.}
\mbox{}
\vspace{0.5em}\\
Given a Python function, we parse its source into an AST and apply a transformer 
that rewrites each Boolean expression into a call to a branch-probe runtime 
(\texttt{BranchProbe}). Each conditional construct—including \texttt{if}, 
\texttt{while}, \texttt{for}, pattern matching, chained comparisons, and logical 
connectives—is assigned a unique branch identifier (\texttt{bid}). During 
execution, the probe records both the Boolean outcome and the tightest observed 
branch-distance values for the True and False directions, producing a complete 
mapping 
\[
    \text{bid} \;\mapsto\; 
        \{\texttt{outcome},\, d_{\text{true}},\, d_{\text{false}}\}
\]
for each evaluated input.  
Additionally, for every branch we compute a \emph{guard chain}, i.e., the list 
of ancestor branch conditions that must be satisfied before the target branch 
can be reached. These guard chains, collected during AST traversal, are later 
used to compute the approach level.

\paragraph{Branch-Distance Computation.}
\mbox{}
\vspace{0.5em}\\
For elementary comparisons such as $a < b$ or $x == y$, we follow the standard 
SBST branch-distance formulation. Given operands $a$, $b$ and operator 
$\mathit{op}$, we define
\[
    \mathit{raw\_f}(a,b,\mathit{op}) = 
    \begin{cases}
        (a-b)+1,& a < b, \\
        (b-a)+1,& a > b, \\
        |a-b|,& a=b, \\
        -|a-b|,& a \ne b,
    \end{cases}
\]
which is then mapped to a non-negative branch distance 
$d \in [0,\infty)$ using operator-specific rules (with $d=0$ indicating that 
the condition is satisfied).  
These rules are implemented in \texttt{raw\_f}, \texttt{bd}, and 
\texttt{b\_from\_raw}.  
The framework further extends branch-distance computation to element-wise 
comparison of tuples or lists, Boolean operators (\texttt{and}, \texttt{or}, 
\texttt{not}), membership expressions via interval compression 
(\texttt{in}/\texttt{not in}), and pattern-matching constructs by conditional 
lifting.

\newpage
\paragraph{Approach Level (AL).}
\mbox{}
\vspace{0.5em}\\
Let $\mathcal{G} = [(g_1,v_1),\dots,(g_k,v_k)]$ denote the ordered list of 
ancestor guards required to reach branch \(\text{bid}\), where each $v_i$ 
is either \text{True} or \text{False}.  
If $i^\ast$ is the index of the first unsatisfied guard, then the approach 
level is defined as
\[
    AL(\text{bid}) = |\mathcal{G}| - i^\ast,
\]
with $AL=0$ only if all guards are satisfied.

\paragraph{Normalised Branch Distance (nBD).}
\mbox{}
\vspace{0.5em}\\
Once the failing guard (or the target guard itself) is identified, its raw 
distance $d$ is normalized using
\[
    \mathrm{nBD}(d) = 1 - 1.001^{-d},
\]
which yields a smooth value in $[0,1)$ that decreases monotonically as the 
input approaches the required branch direction.

\paragraph{Final Fitness Function.}
\mbox{}
\vspace{0.5em}\\
The overall fitness for branch \(b\) and desired direction 
\(\texttt{want\_true} \in \{0,1\}\) is computed as
\[
    F = AL + \mathrm{nBD},
\]
with $F=0$ indicating that the branch was executed in the desired direction.  
This behavior is implemented in \texttt{fitness\_AL()}, which manages guard 
handling, approach-level computation, and branch-distance aggregation.

\paragraph{Execution and Trace Collection.}
\mbox{}
\vspace{0.5em}\\
The instrumented PUT is executed in an isolated probe environment. Before each 
execution the probe resets its internal state, and during execution it records 
branch outcomes and branch-distance values at every conditional evaluation.  
The collected trace is then passed directly into the fitness computation without 
the need for additional runtime instrumentation.

\paragraph{Handling Unreachable Loop Branches.}
\mbox{}
\vspace{0.5em}\\
Certain branch directions are semantically unreachable under Python's loop 
semantics and are therefore excluded during target generation.  
For \texttt{while True} loops, the False direction—representing loop exit—cannot 
occur unless an explicit \texttt{break} is present. The AST transformer detects 
constant-True loop conditions (line~1140 of \texttt{sbst\_core.py}) and marks 
such branches accordingly (lines~1152--1153).  
During target enumeration, \texttt{make\_targets\_for\_func()} 
( lines~1965--1968) automatically omits the unreachable False direction.

Similarly, for \texttt{for}-loops the False branch corresponds to the loop body 
not executing even once. Using iteration metadata tracked through 
\texttt{loop\_minlen} (line~544), the framework determines cases where the 
iterable is guaranteed to be nonempty; \texttt{make\_targets\_for\_func()} 
(lines~1970--1978) omits such False-direction targets as they are 
unsatisfiable.  

By filtering out unreachable loop branches, the framework avoids expending 
search effort on infeasible targets and ensures that fitness evaluation focuses 
solely on semantically meaningful branches.

\newpage
\subsection{Search Algorithms}
    \subsubsection{Baseline: Simple Hill Climbing (HC)}
\label{sec:hc}

As a baseline local search method, we employ a deterministic,
coordinate-wise hill climbing algorithm, hereafter referred to as
\textbf{HC}.  
This baseline operates directly on the original fitness landscape
introduced in Section~\ref{sec:instrumentation}, where each candidate input is
represented as an $n$-dimensional integer vector and evaluated using the
fitness function $F = AL + \mathrm{nBD}$.  
The full implementation is provided in
\texttt{hill\_climb\_simple\_nd\_code} (lines 827--902 of
\texttt{hill\_climb\_multiD.py}).


\paragraph{Overview.}
Simple hill climbing (HC) serves as the baseline local-search method in our
evaluation. The algorithm performs deterministic coordinate-wise descent on the
original fitness landscape, evaluating the $2n$ axis-aligned neighbors obtained
by adding or subtracting one unit in a single dimension. At each step, HC
selects the neighbor with the lowest fitness and moves only when an improvement
is found; otherwise, it terminates at a local minimum. The procedure operates
under the same per-branch time budget of $T=20$ seconds used for GA and
HC-SHIFT, and also obeys a maximum step limit of $K = 2000$. Because HC explores
neither diagonal directions nor alternative basins, and lacks any mechanism for
escaping plateaus or rugged regions where multiple neighbors share identical
fitness, it serves as a minimal but informative baseline against which the
behavior of GA and HC-SHIFT can be contrasted.

\paragraph{(a) Neighbor Generation and Search Strategy.}
\mbox{}
\vspace{0.5em}\\
At each iteration, HC considers exactly $2n$ axis-aligned neighbors obtained
by applying a fixed $\pm 1$ perturbation to a single coordinate:
\[
    \mathcal{N} = 
    \bigl\{
        (x_1,\ldots,x_d - 1,\ldots,x_n),\ 
        (x_1,\ldots,x_d + 1,\ldots,x_n)
    \bigr\}_{d=1}^{n}.
\]
No diagonal moves, adaptive step sizes, or multi-coordinate perturbations are
used; the search progresses solely through these minimal coordinate-wise
increments (lines 868--888).  
Because all $2n$ neighbors may share identical fitness values on plateaus, HC
has no mechanism to escape such regions and is therefore highly sensitive to
flat or rugged landscapes.

Among all evaluated neighbors, HC selects the candidate with the lowest
fitness (steepest descent, lines 891--894):
\[
    x' = \arg\min_{y \in \mathcal{N}} F(y).
\]
If $F(x') < F(x)$, the search moves to $x'$ and continues; otherwise it
terminates at a local minimum (lines 896--900).

\paragraph{(b) Time-Budget Enforcement.}
\mbox{}
\vspace{0.5em}\\
Like GA (Section~\ref{sec:ga}), our HC implementation enforces a strict
per-branch time budget to ensure fair comparison across all methods.  
Before evaluating each neighbor---and before the initial evaluation---HC
checks whether the elapsed time has exceeded the budget (lines 853--856,
862--865, 870--873, 880--883):
\[
    \text{if } (t_{\mathrm{current}} - t_{\mathrm{start}}) \ge t_{\max}
    \quad \Rightarrow \quad \text{terminate and return best}.
\]
If the time limit is reached at any point, HC immediately returns the current
best solution, even if the step limit has not been exhausted.

\paragraph{(c) Stopping Conditions.}
\mbox{}
\vspace{0.5em}\\
HC terminates under three circumstances.  
First, if none of the $2n$ neighbors yields a strictly lower fitness value, the
algorithm concludes that it has reached a local minimum and halts
(line~900).  
Second, the search stops when the maximum step limit of $K = 2000$ iterations
is reached (line~861).  
Finally, HC enforces a strict per-branch time budget and immediately
terminates once the allotted time has elapsed.  
In all termination cases, the algorithm returns the best point encountered
up to that moment, represented as the trajectory
$\{(x_0, F(x_0)), (x_1, F(x_1)), \ldots\}$ (lines~859, 898).

    \newpage
    \subsubsection{Baseline: Genetic Algorithm (GA)}
\label{sec:ga}

As a second baseline, we employ a standard genetic algorithm, hereafter
referred to as \textbf{GA}.  
The GA operates over the same $n$-dimensional integer input domain and uses
the fitness function $F = AL + \mathrm{nBD}$ defined in
Section~\ref{sec:instrumentation}.  
Our implementation follows a steady-state evolutionary scheme with
tournament selection, uniform crossover, independent per-gene mutation, and
elitist survivor selection.  
The full implementation is provided in \texttt{ga.py} (lines 20--246).
Importantly, no SHIFT-based transformations are applied---GA serves purely as
a population-based global-search method for comparison.

\paragraph{Overview.}
The genetic algorithm (GA) follows a standard steady-state evolutionary
procedure that iteratively refines a population of candidate inputs. The
algorithm begins by generating an initial population using either random or
biased initialization and evaluating all individuals under the same
per-branch time budget applied to HC and HC-SHIFT. Each generation then
consists of tournament-based parent selection, uniform crossover, and
independent per-gene mutation, with at least one mutation enforced to avoid
genetic stagnation. Elitist survivor selection preserves the top-performing
individuals, while deduplication removes redundant genotypes and helps
maintain diversity on flat landscapes. At every stage of evaluation—including
initialization, the start of each generation, and during offspring
evaluation—the GA enforces early stopping: if any individual achieves
fitness $F = 0$, the algorithm terminates immediately. Otherwise, evolution
continues until the time budget is exhausted, no further improvements occur,
or the generation limit is reached, at which point the best individual
encountered is returned.

\paragraph{(a) Representation and Initialization.}
\mbox{}
\vspace{0.5em}\\
Each candidate input is represented as an integer vector
$x \in \mathbb{Z}^n$ constrained by the automatically inferred bounds
$[L, H]$ of the instrumented program.  

The initial population of size $P = 10000$ is generated using either
\textbf{random} or \textbf{biased initialization}
(see Section~\ref{sec:init} for details).
Our experiments use biased initialization by default, sampling near
program constants to accelerate early convergence when control-flow
predicates depend on specific literals (e.g., \texttt{if x > 100}).
\paragraph{(b) Selection, Crossover, and Mutation.}
\mbox{}
\vspace{0.5em}\\
\textbf{Selection.}  
We use $k$-way tournament selection with $k=3$ to choose parents
(lines 103--106).  
For each parent, $k$ individuals are sampled uniformly at random from the
current population, and the one with the lowest fitness is selected.

\textbf{Crossover.}  
Given two selected parents $p_1, p_2 \in \mathbb{Z}^n$, uniform crossover
produces an offspring $c \in \mathbb{Z}^n$ by independently choosing each
gene from one parent or the other with equal probability:
\[
    c_i =
    \begin{cases}
        (p_1)_i & \text{with prob.\ } 0.5,\\
        (p_2)_i & \text{with prob.\ } 0.5,
    \end{cases}
\]
for every dimension $i \in \{1,\dots,n\}$ (lines 108--110).

\textbf{Mutation.}  
Mutation is applied independently to each gene with probability
$p_{\mathrm{mut}} = 1/n$ (lines 112--123).  
When a gene is mutated, it is perturbed by a randomly chosen integer step
from the set $\{-3, -2, -1, +1, +2, +3\}$ and clipped to remain within
$[\ell, u]$:
\[
    x_i' = \mathrm{clip}\bigl(x_i + \Delta\bigr),
    \qquad
    \Delta \in \{-3, -2, -1, +1, +2, +3\}.
\]
To prevent genetic stagnation on flat fitness regions, we enforce at least
one mutation per offspring (\texttt{ensure\_mutation=True}, lines 120--122):
if no gene was mutated during the per-gene pass, we forcibly mutate a
randomly selected gene.

\newpage
\paragraph{(c) Elitism and Diversity Control.}
\mbox{}
\vspace{0.5em}\\
After fitness evaluation, the population is sorted by fitness, and the top
$\lceil P \cdot \alpha_e \rceil$ individuals (elite ratio
$\alpha_e = 0.1$, i.e., 10\% of the population) are preserved verbatim into
the next generation (lines 183--185).  
This elitist strategy guarantees monotonic improvement of the best solution
across generations.

To maintain population diversity and reduce redundant fitness evaluations,
all duplicate individuals are removed via deduplication before the next
evaluation cycle (lines 125--131, 200).  
This is particularly important on flat landscapes where many candidates may
converge to the same point.

\paragraph{(d) Time-Budget Enforcement and Early Stopping.}
\mbox{}
\vspace{0.5em}\\
Unlike fixed-generation genetic algorithms, our implementation enforces a 
strict per-branch time limit identical to that used by HC 
(Section~\ref{sec:hc}) and HC-SHIFT (Section~\ref{sec:compression}). Before 
evaluating any candidate—both during initialization and within each 
generation—the GA checks whether the elapsed time has exceeded the
allowed budget (lines~143--154, 206--212):
\[
    \text{if } (t_{\mathrm{current}} - t_{\mathrm{start}}) \ge t_{\max}
    \quad \Rightarrow \quad \text{terminate and return best}.
\]
This mechanism guarantees that all three methods (HC, GA, HC-SHIFT) operate 
under identical computational constraints and that their performance can be 
compared fairly.

In addition to time-budget enforcement, the GA incorporates an early-stopping 
criterion: once any individual achieves fitness $F = 0$, indicating that the 
target branch has been exercised in the desired direction, the algorithm 
terminates immediately. Early stopping is applied consistently during the 
initial evaluation of the population (lines~167--170), at the beginning of 
each generation (lines~188--191), and during the evaluation of newly generated 
offspring (lines~229--233). In all such cases, the GA returns the discovered 
solution without expending the remaining time or generation budget.


    \newpage
    \subsubsection{Proposed: HC-SHIFT (N-Dimensional Compression Hill Climbing)}
\label{sec:compression}

Our main contribution is an $n$-dimensional hill climbing algorithm that
dynamically reshapes the search landscape by detecting and compressing
one-dimensional fitness basins.  
Rather than navigating the original integer-valued space directly, the
algorithm constructs an auxiliary nonlinear coordinate system in which
large flat regions are contracted into short intervals.  
This transformation enables efficient escape from plateaus and
slow-gradient regions that frequently arise in SBST fitness landscapes.

We refer to this sigmoid-based, invertible transformation framework as
\textbf{SHIFT} (Sigmoid-Based Heuristic Invertible
Fitness-Landscape Transformation  for Accelerating
SBST), and to the resulting algorithm as
\textbf{HC-SHIFT}.  
The complete implementation is available in \texttt{compression\_hc.py}.

\paragraph{Overview.}
The $(AL + \mathrm{nBD})$ fitness formulation (Section~\ref{sec:instrumentation}) 
frequently produces extended plateaus and broad basins in which many consecutive 
inputs share identical fitness. Classical hill climbing (HC, 
Section~\ref{sec:hc}) becomes trapped in such regions because all neighboring 
points appear equally promising. Techniques such as tabu search or simulated 
annealing escape through randomness or memory, yet they leave the underlying 
fitness geometry unchanged. HC-SHIFT takes a different approach: it explicitly 
reshapes the search space by identifying the extent of flat or weakly varying 
regions and compressing them into a smaller geometric footprint, thereby 
introducing new gradients that guide search progress.

Algorithm~\ref{alg:hc-shift-main} summarizes the overall procedure. Like HC and 
GA (Section~\ref{sec:ga}), HC-SHIFT adheres to a strict per-branch time budget 
of $T = 20$ seconds (Section~\ref{sec:timebudget}); before each fitness 
evaluation, the algorithm checks for timeout and terminates early if necessary 
to ensure fair comparison. The algorithm begins by marking all dimensions as 
active and initializing an empty compression manager. It then enters the main 
hill-climbing loop (Algorithm~\ref{alg:hill-climb-loop}), where it generates 
axis-aligned and diagonal neighbors in compressed space 
(Algorithm~\ref{alg:generate-neighbors}), tracks which dimensions contribute to 
fitness changes, and deactivates stagnant dimensions after $p$ consecutive 
steps.

When the hill-climbing loop can no longer make progress, HC-SHIFT performs 
bidirectional basin detection on each active dimension 
(Algorithm~\ref{alg:basin-detection}). These probes estimate 1D basin 
boundaries, update compression metadata, and construct SHIFT warpings 
$w_d$ that collapse flat regions into shorter intervals in $Z$-space. 
Finally, the algorithm evaluates candidates at the detected basin boundaries and 
restarts from the best point (Algorithm~\ref{alg:restart-selection}). This cycle 
repeats until convergence ($F=0$), the set of active dimensions becomes empty, 
or the time budget expires.

By combining fitness-driven basin detection, invertible space compression, 
dimension-level pruning, and systematic boundary restarts, HC-SHIFT provides a 
cohesive landscape-shaping framework that enables efficient navigation of 
complex $n$-dimensional search spaces.

\paragraph{(a) 1D Basin Detection.}
\mbox{}
\vspace{0.5em}\\
Given a local minimum $x^\ast$ along dimension~$d$, HC-SHIFT performs a 
bidirectional probing procedure to identify a contiguous interval
\(
    B_d = [\ell_d, r_d]
\)
representing the flat or near-flat basin surrounding $x^\ast$.  
Starting from $x^\ast$, the algorithm evaluates points one step at a time in 
both directions and classifies each point according to the following rules:

\begin{itemize}[nosep]
    \item \textbf{Rule~1 (Equal fitness).}  
          The fitness matches $F(x^\ast)$; the point is recorded as part of the 
          basin, and probing continues.
    \item \textbf{Rule~2 (Worse fitness).}  
          The fitness is worse but not better than $F(x^\ast)$; the point is 
          ignored but probing continues in search of a possible exit.
    \item \textbf{Rule~3 (Better fitness).}  
          The fitness strictly improves; probing stops immediately in that 
          direction, and the point is marked as a basin boundary.
\end{itemize}

During scanning, HC-SHIFT applies a priority policy for determining the basin 
boundary.  
If a Rule~3 point is encountered, its location is taken as the boundary for that 
direction.  
If no such point appears, the boundary is assigned to the farthest Rule~1 point, 
representing the maximal extent of the flat region.  
If neither Rule~1 nor Rule~3 occurs before the probing limit is reached—i.e., 
only Rule~2 points are observed—the algorithm concludes that no meaningful basin 
exists on that side.

\newpage
To prevent unbounded scanning on long plateaus or oscillatory regions, the 
probing distance is capped by a dimension-specific limit.  
The full detection logic is implemented in 
\texttt{detect\_compression\_basin()} (lines~126--224 of 
\texttt{compression\_hc.py}) and is invoked by 
Algorithm~\ref{alg:basin-detection}.  
If the resulting interval satisfies $|B_d| < 2$, HC-SHIFT skips compression for 
dimension~$d$ because the detected region is too small to constitute a 
meaningful basin.

\paragraph{(b) Sigmoid Warping (X-space $\to$ Z-space).}
\mbox{}
\vspace{0.5em}\\
For each detected basin $B_d=[s, s+L-1]$ of length $L$, HC-SHIFT builds
a smooth warping function $w_d: \mathbb{R}\to \mathbb{R}$ based on a
centered sigmoid:
\[
    z = w_d(x) = 
    \begin{cases}
        x & \text{if } x < s, \\
        s + \sigma\!\left(\alpha\left(\frac{x-s}{L}-\frac12\right)\right) 
            & \text{if } s \le x \le s+L, \\
        x - (L-1) & \text{if } x > s+L,
    \end{cases}
\]
where $\alpha=5.0$ controls the steepness and $\sigma(t)=1/(1+e^{-t})$ is
the standard sigmoid function.  
This mapping contracts the full basin $B_d$ of length $L$ into roughly
one unit in $z$-space while keeping the exterior regions affine
(shifted by $L-1$ on the right to maintain continuity).  
The inverse function $w_d^{-1}$ uses the logit to map compressed
coordinates back into valid integer positions in the original $X$-space.  
These transformations are implemented by the \texttt{SigmoidWarping}
class (lines 15--74) and constitute the core of the SHIFT framework.

\paragraph{Theoretical Guarantee.}
\mbox{}
\vspace{0.5em}\\
\begin{proposition}[Global Optimum Preservation]
\label{prop:global-opt}
Let $F: \mathcal{X} \to \mathbb{R}$ be a fitness function with global minimizer
$x^{*} \in \mathcal{X}$.
Let $w: \mathcal{X} \to \mathcal{Z}$ be the SHIFT transformation composed of
per-dimension warpings $w_d$ as defined above.
Then $w$ is a bijection, and the transformed fitness
$\tilde{F}(z) := F(w^{-1}(z))$
satisfies
\[
    \arg\min_{z \in \mathcal{Z}}\,\tilde{F}(z) \;=\; w(x^{*}).
\]
That is, the global minimizer is neither displaced nor duplicated by the SHIFT transformation.
\end{proposition}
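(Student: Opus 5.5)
The plan is to reduce everything to a single structural fact: each one-dimensional warping $w_d$ is \emph{strictly increasing} on $\mathbb{R}$. From this, bijectivity onto the image follows, and the optimality statement is then pure set-theoretic bookkeeping.

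\textbf{Monotonicity of each piece.} I would treat the three branches of $w_d$ separately.
\begin{itemize}
  \item The left branch $x\mapsto x$ is strictly increasing and maps $(-\infty,s)$ into $(-\infty,s)$.
  \item The middle branch is $x\mapsto s+\sigma(\alpha((x-s)/L-\tfrac12))$. It is strictly increasing because $\sigma'(t)=\sigma(t)(1-\sigma(t))>0$ and $\alpha/L>0$. It maps $[s,s+L]$ onto $[s+\sigma(-\alpha/2),\,s+\sigma(\alpha/2)]$, which lies in the open interval $(s,s+1)$ since $0<\sigma<1$.
  \item The right branch $x\mapsto x-(L-1)$ is strictly increasing. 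For $x>s+L$ it takes values greater than $s+1$.
\end{itemize}
The three image ranges are therefore ordered: $(-\infty,s)$, then a subset of $(s,s+1)$, then $(s+1,\infty)$. Hence $w_d$ is strictly increasing on all of $\mathbb{R}$, even though it is not continuous at the junctions. Strict monotonicity gives injectivity.

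\textbf{Defining $\mathcal{Z}$ and the inverse.} I would take $\mathcal{Z}:=w(\mathcal{X})$, so that surjectivity holds by construction. The inverse $w_d^{-1}$ is then given branchwise: the identity on the left, the logit formula $x=s+L\bigl(\tfrac12+\tfrac1\alpha\,\mathrm{logit}(z-s)\bigr)$ in the middle, and $z+(L-1)$ on the right. This matches the description of the \texttt{SigmoidWarping} inverse. If several basins have been compressed along the same axis over successive restarts, the effective warping is a composition of such maps and remains strictly increasing.

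\textbf{Lifting to $n$ dimensions.} The full map $w(x)=(w_1(x_1),\dots,w_n(x_n))$ acts coordinatewise, with $w_d$ equal to the identity for uncompressed dimensions. It is therefore injective, and it is a bijection from $\mathcal{X}$ onto $\mathcal{Z}$. Its inverse is $w^{-1}(z)=(w_1^{-1}(z_1),\dots,w_n^{-1}(z_n))$.

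\textbf{Transfer of the argmin.} For any $z\in\mathcal{Z}$ there is a unique $x=w^{-1}(z)$, and $\tilde F(z)=F(x)$. It follows that $\tilde F(\mathcal{Z})=F(\mathcal{X})$, so the two functions have the same minimum value. Moreover, $z$ attains this minimum if and only if $w^{-1}(z)\in\arg\min F$, which gives $\arg\min\tilde F=w(\arg\min F)$. Under the statement's hypothesis that $x^*$ is \emph{the} global minimizer (so $\arg\min F=\{x^*\}$), this yields $\arg\min\tilde F=\{w(x^*)\}$. I would also note the general form: if $F$ has several minimizers, $w$ maps them bijectively onto the minimizers of $\tilde F$, so none is created or merged.

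\textbf{Main obstacle.} The delicate step is the bijectivity claim, because $w_d$ as written does not match at the junctions: $w_d(s)=s+\sigma(-\alpha/2)\neq s$, and $\lim_{x\downarrow s+L}w_d(x)=s+1\neq s+\sigma(\alpha/2)$. A continuity or intermediate-value argument therefore cannot be used. Surjectivity onto $\mathbb{R}$ genuinely fails, and this is why $\mathcal{Z}$ must be defined as the image. What rescues the argument is exactly the interval-ordering observation from the first step. I would also remark that only injectivity matters for the argmin transfer, so the result holds whether $\mathcal{X}$ is the integer lattice or a real box.
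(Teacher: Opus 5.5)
Your proof is correct, and its skeleton is the paper's: strict monotonicity gives a bijection, so $\tilde F$ is a reparametrization that transports the argmin. What differs is how the two load-bearing steps are justified, and in both cases yours is the version that holds for the formula as written.

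The paper asserts that each $w_d$ is continuous and obtains surjectivity ``by continuity''. Its piecewise justification of global strict monotonicity also silently needs the pieces to meet. As you point out, they do not. $w_d$ jumps by $\sigma(-\alpha/2)\approx 0.076$ at $x=s$ and again at $x=s+L$. It therefore omits $[s,\,s+\sigma(-\alpha/2))$ and $(s+\sigma(\alpha/2),\,s+1]$ and is not onto $\mathbb{R}$. Your ordering of the three image ranges, together with $\mathcal{Z}:=w(\mathcal{X})$, is exactly what repairs this.

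The continuity route becomes worthwhile only after a fix. Replacing $\sigma$ in the middle branch by $\bigl(\sigma(\cdot)-\sigma(-\alpha/2)\bigr)/\bigl(\sigma(\alpha/2)-\sigma(-\alpha/2)\bigr)$ makes $w_d$ a continuous increasing bijection of $\mathbb{R}$. Then the algorithm's moves $w_d^{-1}(z_d\pm 1)$ are always defined. With your codomain they can fall into the gaps (e.g.\ $z_d+1=s$ when $x_d=s-1$), though this does not affect the proposition.

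Finally, the paper deduces that $w(x^*)$ is the \emph{unique} minimizer from the non-strict inequality $\tilde F(w(x^*))\le\tilde F(w(x))$. That only shows it is \emph{a} minimizer. Your identity $\arg\min\tilde F=w(\arg\min F)$, with uniqueness taken from the hypothesis $\arg\min F=\{x^*\}$, is the correct statement. The general form is the relevant one in SBST, where every input covering the target branch has $F=0$.
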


\begin{proof}
Each per-dimension warping $w_d$ is continuous and strictly monotone: it is
affine (identity-shifted) outside the basin $B_d$ and strictly increasing
on $B_d$ because $\sigma' > 0$ everywhere.
Strict monotonicity implies injectivity; surjectivity onto the codomain
follows by continuity and the fact that $w_d$ covers all values outside
$B_d$ identically.
Hence each $w_d$ is a bijection, and their composition $w$ is also a
bijection.
Because $w$ is a bijection, $\tilde{F}(z) = F(w^{-1}(z))$ is merely a
reparametrization of $F$: every value attained by $F$ is attained by
$\tilde{F}$ exactly once at the corresponding transformed coordinate.
In particular, $\tilde{F}(w(x^{*})) = F(x^{*}) \le F(x) = \tilde{F}(w(x))$
for all $x \in \mathcal{X}$, so $w(x^{*})$ is the unique global minimizer of $\tilde{F}$.
\end{proof}

\begin{remark}
Proposition~\ref{prop:global-opt} is stated for the idealized continuous reparameterization underlying SHIFT.
In practice, the implementation operates on integer-valued input domains and recovers original coordinates via an integer-rounding inverse mapping, which constitutes an operational approximation of the exact bijection.
This approximation does not change which optimum the algorithm targets, but readers should note that the proof assumes the continuous formulation.
\end{remark}

\paragraph{Justification for the Sigmoid Choice.}
\mbox{}
\vspace{0.5em}\\
The sigmoid $\sigma(t) = 1/(1+e^{-t})$ is chosen over alternative compression
functions (e.g., piecewise-linear, tanh-based, or Gaussian kernels) for three
reasons.
First, it is \emph{smooth} ($C^\infty$), so the warped landscape contains no
slope discontinuities that would introduce artificial local optima at basin
boundaries—a critical property for gradient-following local search.
Second, it is \emph{strictly monotone}, which is the property exploited in
Proposition~\ref{prop:global-opt} to guarantee bijectivity and global optimum
preservation.
Third, its \emph{sigmoidal saturation} at both ends concentrates the
compression in the interior of the basin while leaving the exterior regions
only affine-shifted; in contrast, a piecewise-linear map would achieve
compression but introduce kinks, and a Gaussian-based kernel is not easily
invertible in closed form.
The steepness parameter $\alpha = 5.0$ controls the compression ratio: larger
values push more of the basin length into a narrower $Z$-space interval.
This value was selected so that a basin of moderate length is contracted to
roughly one unit in $Z$-space, making a single step in the compressed
coordinate equivalent to traversing the entire plateau in the original space.

\paragraph{(c) Metadata in Original X-Space.}
\mbox{}
\vspace{0.5em}\\
All compression metadata—specifically the basin boundaries $(s, L)$—is stored
directly in the original $X$-space rather than in the transformed $Z$-space.
Maintaining metadata in $X$-space ensures that all compression decisions remain
aligned with the true fitness function $F$, prevents inconsistencies that could
arise from repeatedly applying warpings, and supports stable behavior across
multiple compression cycles.  
This design also enables conflict-free merging of overlapping basins through
the routine \texttt{merge\_overlapping\_compressions} (lines~231--260), which
consolidates adjacent or intersecting compressed regions into a coherent
representation.  
The responsibilities associated with managing these structures are handled by
\texttt{MetadataCompressionOriginalSpace} (lines~76--121) together with
\texttt{CompressionManagerND} (lines~267--317), which jointly maintain the
state of all SHIFT-related metadata across dimensions.

\paragraph{(d) Multi-D Compression Manager and Active Dimensions.}
\mbox{}
\vspace{0.5em}\\
To extend basin compression to the multi-dimensional setting, HC-SHIFT
maintains a dynamically updated set of \emph{active dimensions}
$\mathcal{A} \subseteq \{0,\dots,n-1\}$—those axes along which recent movement,
basin detection, or fitness changes suggest the presence of exploitable
structure. Initially all dimensions are active, but the set is progressively
refined as the search proceeds. During each hill-climb step
(Algorithm~\ref{alg:hill-climb-loop}), HC-SHIFT identifies the subset of
dimensions $\mathcal{M}$ that contributed to any observed fitness improvements
(line~12). For every dimension $d\notin\mathcal{M}$, a stagnation counter
$\sigma_d$ is incremented, whereas counters for dimensions in $\mathcal{M}$ are
reset to zero. When $\sigma_d$ exceeds a patience threshold $p$ (default
$p=20$), dimension $d$ is removed from the active set (line~14), indicating
that further exploration along that axis is unlikely to yield progress.

This mechanism ensures that only active dimensions participate in neighbor
generation (Algorithm~\ref{alg:generate-neighbors}), basin detection
(Algorithm~\ref{alg:basin-detection}), and restart selection
(Algorithm~\ref{alg:restart-selection}). As a result, the effective
per-iteration cost is reduced from $O(n)$ to $O(k)$ over active dimensions, where
$k = |\mathcal{A}| \ll n$ denotes the number of remaining active dimensions.
Inactive dimensions retain their previously computed compression metadata but
are no longer revisited unless reactivated later in the search.

For each active dimension $d$, the compression manager maintains a slice-based
map
\[
    \mathrm{dim\_compressions}[d] 
        : (\text{fixed coords}) 
        \mapsto 
        \{(s_1,L_1), (s_2,L_2), \dots \},
\]
where ``fixed coords'' represents all coordinates other than $d$. This design
supports heterogeneous and slice-dependent compression geometries, enabling
HC-SHIFT to adapt to highly anisotropic and irregular fitness landscapes.
Unlike tabu-style or AVM-style strategies that treat all coordinates uniformly
or rely solely on historical movement patterns, HC-SHIFT allocates computation
selectively, focusing effort precisely on the axes that reveal actionable
geometric structure.

\paragraph{(e) Search Procedure and Neighbor Selection.}
\mbox{}
\vspace{0.5em}\\
The HC-SHIFT search procedure alternates between hill climbing in the
compressed $Z$-space, basin detection in the original $X$-space, and
restarts from basin boundaries. During a hill-climb phase
(Algorithm~\ref{alg:hill-climb-loop}), neighbor candidates are generated
according to Algorithm~\ref{alg:generate-neighbors}. For each active
dimension $d$, the current coordinate $x_d$ is mapped into compressed form
$z_d = w_d(x_d)$ whenever a compression mapping exists. Single-step moves
$z_d \pm 1$ are then considered, and the resulting points are mapped back
using the inverse transform $w_d^{-1}$:
\[
    x_d' \in 
    \bigl\{\, w_d^{-1}(z_d - 1),\ w_d^{-1}(z_d + 1)\,\bigr\}.
\]
When no compression is associated with dimension $d$, these updates reduce
naturally to $x_d \pm 1$. As a consequence, seemingly small moves in the
compressed space may correspond to long jumps along flat basins in the
original landscape.

In addition to axis-aligned moves, HC-SHIFT constructs diagonal neighbors
by simultaneously perturbing pairs of active dimensions. For any pair
$(d_1,d_2)$, the corresponding compressed coordinates
$(z_{d_1},z_{d_2})$ are perturbed independently by $\pm 1$ and then mapped
back:
\[
    (x_{d_1}', x_{d_2}')
    =
    \bigl(
        w_{d_1}^{-1}(z_{d_1} \pm 1),\ 
        w_{d_2}^{-1}(z_{d_2} \pm 1)
    \bigr).
\]
This mechanism produces $O(|\mathcal{A}|^2)$ diagonal candidates and allows
the search to traverse narrow or curved passages that would be difficult to
escape using only axis-aligned moves.

Whenever hill climbing reaches a point at which none of the generated
neighbors reduces the fitness, basin detection is invoked
(Algorithm~\ref{alg:basin-detection}) to identify flat regions along each
active dimension. For each detected basin $B_d = [\ell_d, r_d]$, restart
candidates are placed at the geometric boundaries $\ell_d - 1$ and
$r_d + 1$ (Algorithm~\ref{alg:restart-selection}), enabling the algorithm
to reposition the search directly outside the compressed plateau rather
than stepping through it incrementally.

Among all evaluated candidates, HC-SHIFT adopts the steepest-descent
strategy and moves toward the neighbor with the smallest fitness value
(lines~8--16 of Algorithm~\ref{alg:hill-climb-loop}). If a strictly better
neighbor exists, the algorithm continues its ascent in the compressed
space; otherwise, basin detection is triggered. Convergence speed is
recorded as the number of hill-climb steps executed in $Z$-space, a metric
that often underestimates the effective distance traveled in the original
$X$-space because SHIFT compresses large plateaus and oscillatory regions
into compact intervals.


\paragraph{(f) Comparison with Tabu Search and Related Work.}
\mbox{}
\vspace{0.5em}\\
HC-SHIFT differs fundamentally from tabu search in both purpose and mechanism. 
Whereas tabu search modifies the search \emph{trajectory} by maintaining a memory 
structure that prevents revisits, HC-SHIFT instead modifies the \emph{search 
space} itself. Through its SHIFT transformation, flat regions are compressed 
into short intervals, enabling the algorithm to traverse large plateaus with 
only a few effective steps in $Z$-space. Tabu search reduces cycling but does 
not alter the underlying geometry of the landscape and does not reduce the 
effective dimensionality of the problem. In contrast, HC-SHIFT incorporates 
dimension-level pruning via stagnation tracking, allowing it to focus 
computation on the axes that meaningfully influence progress.

Relative to prior SBST techniques such as AVM variants, adaptive step-size 
schemes, and gradient-smoothing methods, HC-SHIFT introduces a distinct 
landscape-shaping perspective. It performs explicit, fitness-driven probing of 
flat regions to estimate basin boundaries through bidirectional detection; it 
applies smooth and invertible sigmoid-based warpings rather than discrete 
mutational shifts; and it maintains all compression metadata in the original 
$X$-space to ensure stability across repeated transformations. Additionally, 
HC-SHIFT applies per-dimension compression independently while dynamically 
selecting active dimensions based on stagnation counters, thereby avoiding 
unnecessary compression on irrelevant axes. The neighbor-generation strategy 
in compressed space further leverages diagonal movements, enabling escape from 
narrow transition corridors that commonly trap local-search methods. Finally, 
HC-SHIFT preserves and accumulates compression information across iterations 
for the same target branch—such as basin boundaries, active dimensions, and 
SHIFT parameters—allowing the algorithm to operate on increasingly informed 
landscapes rather than restarting from scratch at each attempt.

Taken together, these mechanisms constitute a distinct form of 
landscape-shaping local search that proves highly effective on the rugged and 
plateau-rich fitness functions frequently encountered in SBST. A detailed 
empirical comparison with HC and GA is provided in 
Section~\ref{sec:experiment}.


\section{Experiment}
\label{sec:experiment}

\subsection{Pilot Study on Synthetic Fitness Landscapes}
\label{sec:pilot-exp}
Prior to the main evaluation, we conduct a pilot study on synthetic fitness landscapes to verify the fundamental behavior of SHIFT.
This study examines how compression accumulates across trials and how this accumulation affects convergence on landscapes of varying ruggedness and dimensionality.
Across all controlled landscapes considered, SHIFT consistently exhibits superior convergence behavior, requiring notably fewer restarts than HC and surpassing GA in all evaluated settings.
SHIFT’s compression mechanism progressively enhances trial effectiveness, enabling reliable escape from rugged or high-dimensional fitness basins.

\paragraph{Synthetic Fitness Functions}
\label{subsubsec:synthetic-fitness}
We evaluate eight synthetic fitness landscapes—Needle, Plateau, Rugged, and Combined—each instantiated in both 1D and 2D. 
These landscapes span a diverse range of structural challenges for assessing algorithmic behavior. Here, we present two representative examples: a 1D needle landscape in Figure~\ref{fig:needle-1d} and a 2D rugged landscape in Figure~\ref{fig:rugged-2d}.






\begin{figure}[H]
\centering
\begin{subfigure}{0.48\linewidth}
    \centering
    \includegraphics[width=\linewidth]{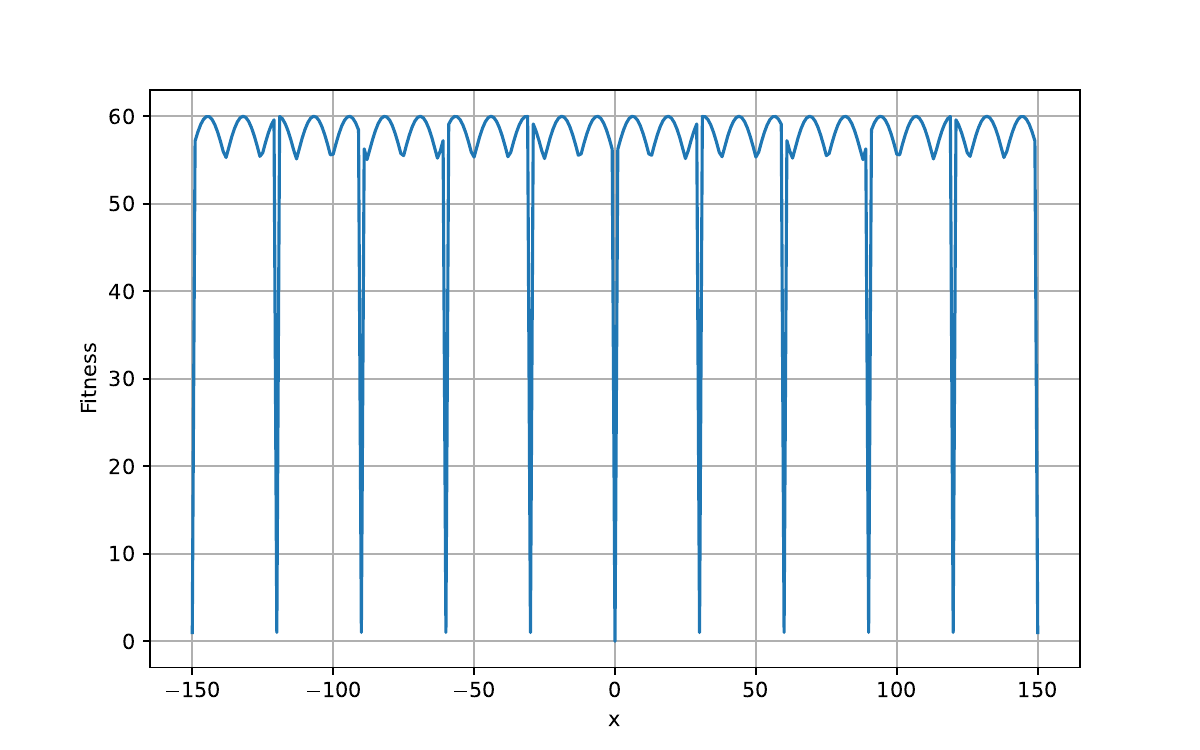}
    \caption{1D Needle landscape.}
    \label{fig:needle-1d}
\end{subfigure}
\hfill
\begin{subfigure}{0.48\linewidth}
    \centering
    \includegraphics[width=\linewidth]{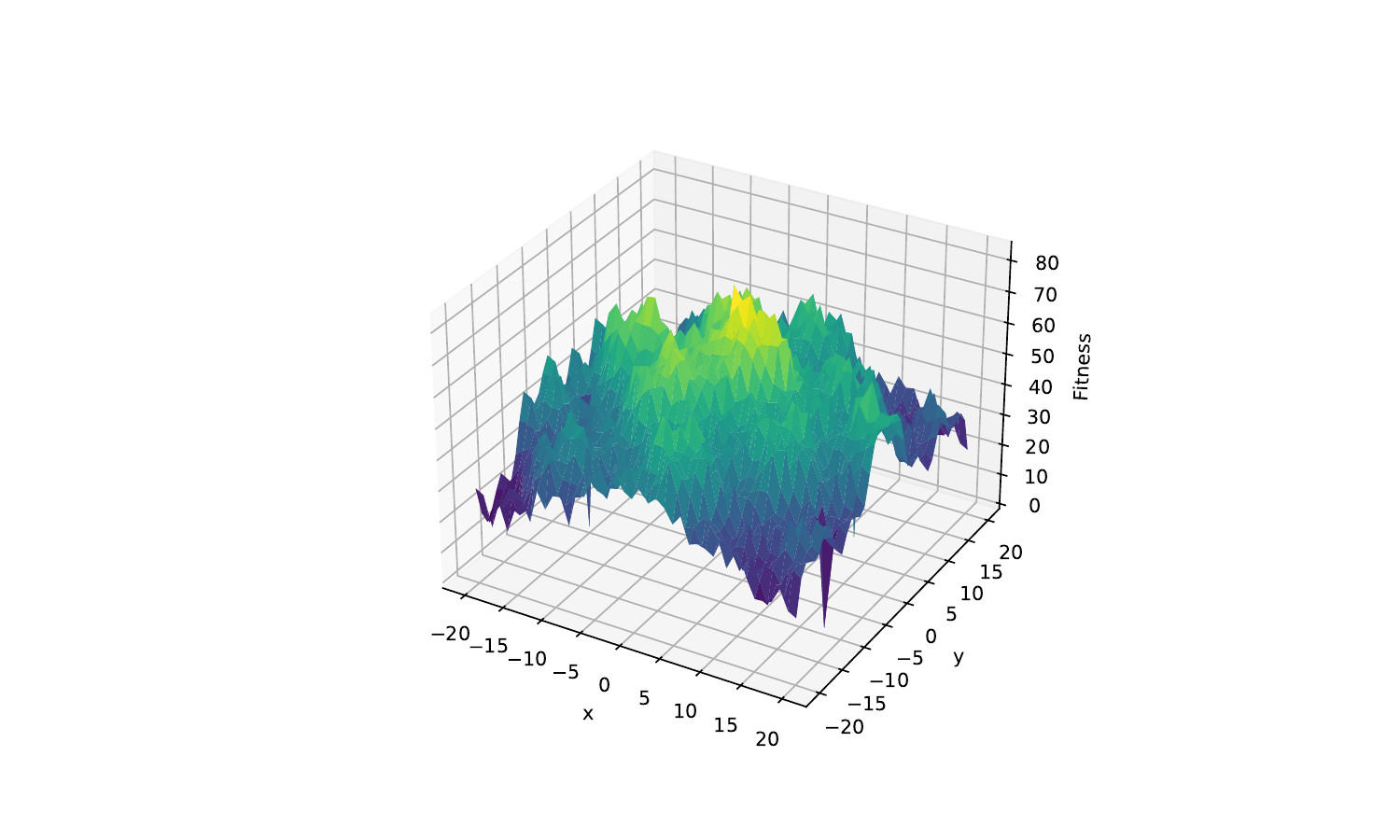}
    \caption{2D Rugged landscape.}
    \label{fig:rugged-2d}
\end{subfigure}

\caption{
Examples of representative synthetic fitness landscapes used in the pilot study.
}
\label{fig:needle-rugged-side-by-side}
\end{figure}

Additional landscapes are included in
Appendix~A for completeness.

\paragraph{Hyperparameter Settings}
\label{subsubsec:hyperparams}
All methods share a common time budget and initialization conditions.  
The full hyperparameter configuration for HC, HC-SHIFT, and GA is summarized in Appendix B.

\subsection{Results}
\subsubsection{Success Rates Across Test Landscapes}
While all algorithms succeed on the 1D landscapes, GA fails to converge within the time budget on the more structurally difficult 2D Needle, Rugged, and Combined landscapes.

\begin{table}[H]
\centering
\setlength{\tabcolsep}{10pt}
\renewcommand{\arraystretch}{1.15}
\begin{tabular}{llccc}
\toprule
\textbf{Dim.} & \textbf{Landscape} & \textbf{GA} & \textbf{HC} & \textbf{HC-SHIFT} \\
\midrule
\multirow{4}{*}{1D}
    & Needle     & p & p & p \\
    & Plateau    & p & p & p \\
    & Rugged     & p & p & p \\
    & Combined   & p & p & p \\
\midrule
\multirow{4}{*}{2D}
    & Needle     & \cellcolor{gray!20} f & p & p \\
    & Plateau    & p & p & p \\
    & Rugged     & \cellcolor{gray!20} f & p & p \\
    & Combined   & \cellcolor{gray!20} f & p & p \\
\bottomrule
\end{tabular}
\caption{
Success (p) or failure (f) of GA, HC, and HC-SHIFT on synthetic landscapes.
Shading indicates cases where GA fails within the time budget.
}
\label{tab:landscape-success-compact}
\end{table}



\subsubsection{Analysis of Trial Counts (Table~\ref{tab:hc-shift-trials-grouped}) and NFE (Table~\ref{tab:hc-hcc-dim-evals})}
HC-SHIFT requires markedly fewer trials and fitness evaluations than standard HC across all synthetic landscapes, with the performance gap widening markedly in 2D settings. While HC exhibits restart-intensive behavior—incurring thousands of trials and tens of thousands of evaluations on rugged or needle-like 2D functions—SHIFT converges rapidly by accumulating compression information that progressively narrows the effective search space. These results demonstrate that SHIFT scales more consistently than HC as dimensionality and landscape complexity increase.

\begin{table}[H]
\centering
\setlength{\tabcolsep}{10pt}
\renewcommand{\arraystretch}{1.2}
\begin{tabular}{llcc}
\toprule
\textbf{Dim.} & \textbf{Landscape} & \textbf{HC} & \textbf{HC-SHIFT} \\
\midrule
\multirow{4}{*}{1D}
    & Needle     & 47    & 1 \\
    & Plateau    & 8     & 1 \\
    & Rugged     & 47    & 1 \\
    & Combined   & 47    & 1 \\
\midrule
\multirow{4}{*}{2D}
    & Needle     & 8198  & 1 \\
    & Plateau    & 8     & 1 \\
    & Rugged     & 11690 & 407 \\
    & Combined   & 1463  & 543 \\
\bottomrule
\end{tabular}
\caption{Number of trials required for convergence on 1D and 2D synthetic landscapes.}
\label{tab:hc-shift-trials-grouped}
\end{table}

\begin{table}[H]
\centering
\setlength{\tabcolsep}{10pt}
\renewcommand{\arraystretch}{1.2}
\begin{tabular}{l l cc}
\toprule
\textbf{Dim.} & \textbf{Landscape} & \textbf{HC} & \textbf{HC-SHIFT} \\
\midrule
\multirow{4}{*}{1D}
    & Needle     & 173   & 5 \\
    & Plateau    & 326   & 93 \\
    & Rugged     & 112   & 23 \\
    & Combined   & 164   & 28 \\
\midrule
\multirow{4}{*}{2D}
    & Needle     & 55119 & 12 \\
    & Plateau    & 372   & 140 \\
    & Rugged     & 38905 & 14160 \\
    & Combined   & 6883  & 5926 \\
\bottomrule
\end{tabular}
\caption{Number of fitness evaluations (NFE) executed until convergence on 1D and 2D synthetic landscapes.}
\label{tab:hc-hcc-dim-evals}
\end{table}

\newpage
\subsubsection{Case Studies on 2D Landscapes}
\label{subsec:case-study}

To better illustrate the qualitative differences between HC and HC-SHIFT, we present
two representative case studies on 2D landscapes: a structurally simple needle-like
surface and a highly rugged multimodal surface.

\paragraph{Case 1: Simple 2D Landscape (Needle 2D).}
As shown in Figure~\ref{fig:case-needle-2d}, SHIFT collapses the large flat or
needle-like domain into a navigable structure, enabling direct convergence to the
global basin in a single trial.  
In contrast, HC exhibits no directional signal in this landscape and
wanders chaotically until repeatedly restarting, ultimately requiring 8,198 trials
to locate the optimum.

\begin{figure}[H]
\centering
\includegraphics[width=0.48\linewidth]{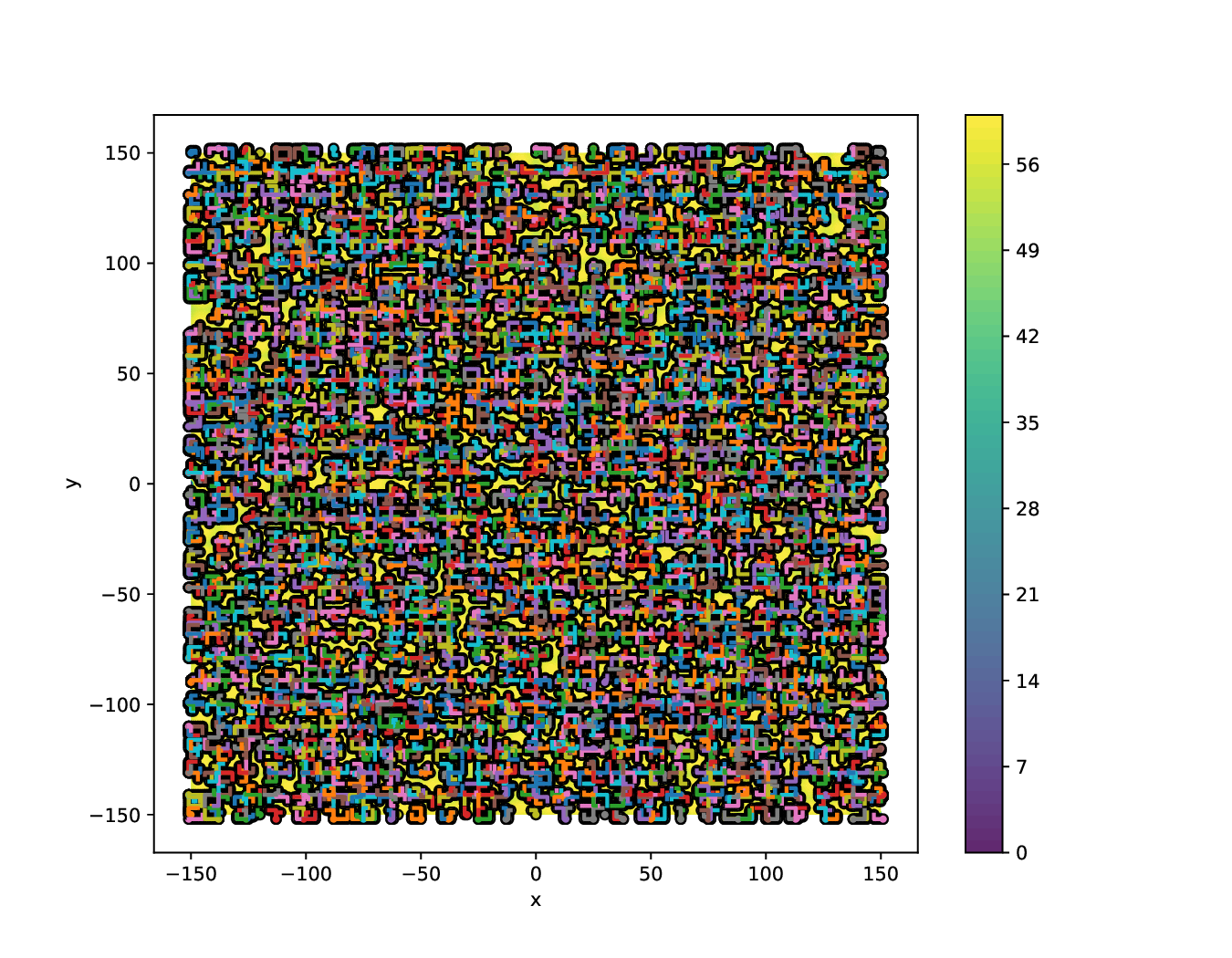}
\includegraphics[width=0.48\linewidth]{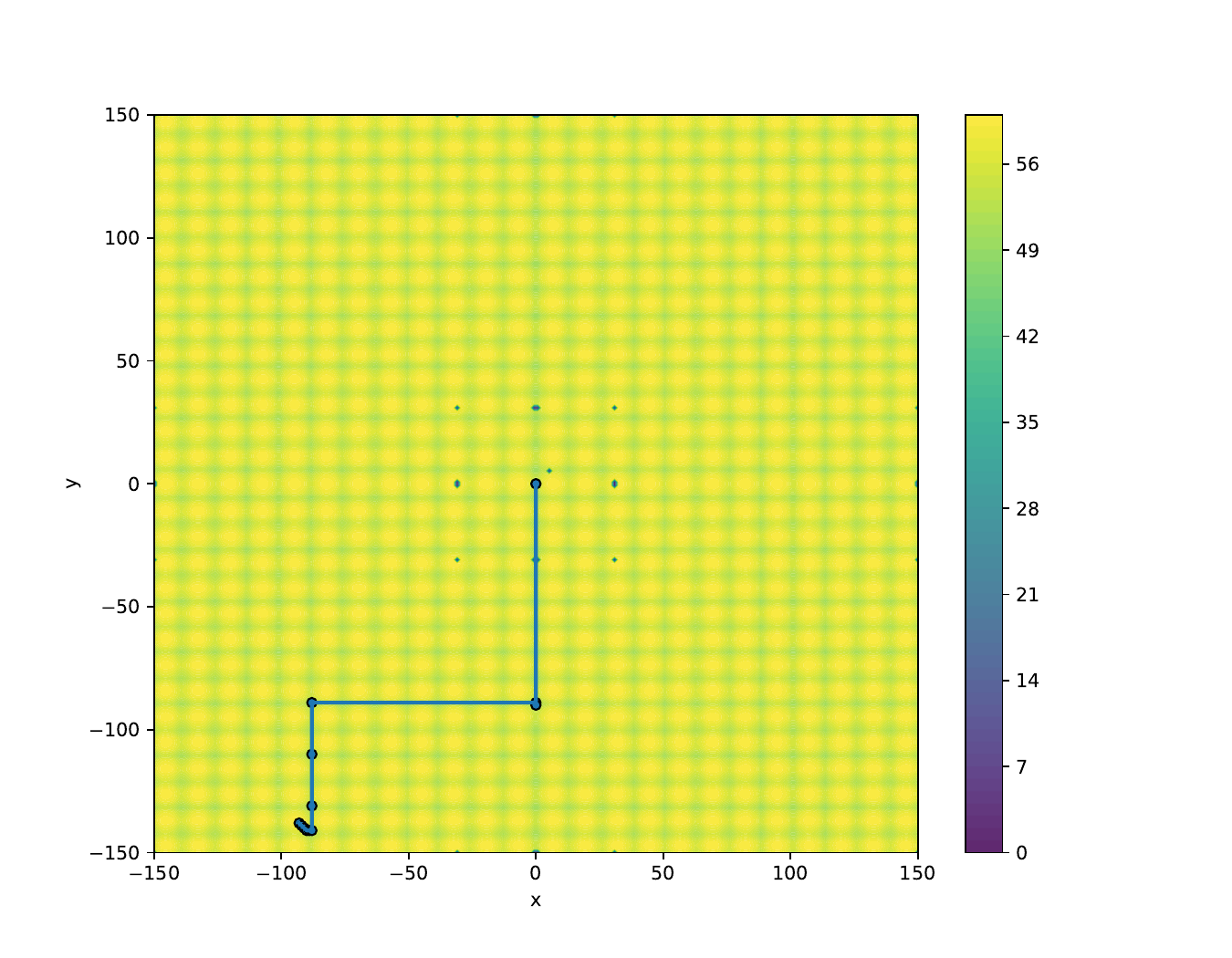}
\caption{Search trajectories of HC (left) and HC-SHIFT (right) overlaid on the contour map of the Needle 2D landscape.}
\label{fig:case-needle-2d}
\end{figure}

\paragraph{Case 2: Rugged High-Dimensional Landscape (Rugged 2D).}
On the highly irregular Rugged 2D landscape (Figure~\ref{fig:case-rugged-2d}), HC
becomes trapped almost immediately, requiring an extremely large number of restarts
(11{,}690 trials) due to its inability to escape dense local minima.
SHIFT, however, progressively compresses the landscape, reducing the effective
search space at each restart and ultimately converging after 407 trials.
The characteristic grid-like trajectory arises because SHIFT compresses each
dimension independently, causing the algorithm to alternate axis-aligned moves
as the underlying landscape is incrementally flattened.

\begin{figure}[H]
\centering
\includegraphics[width=0.48\linewidth]{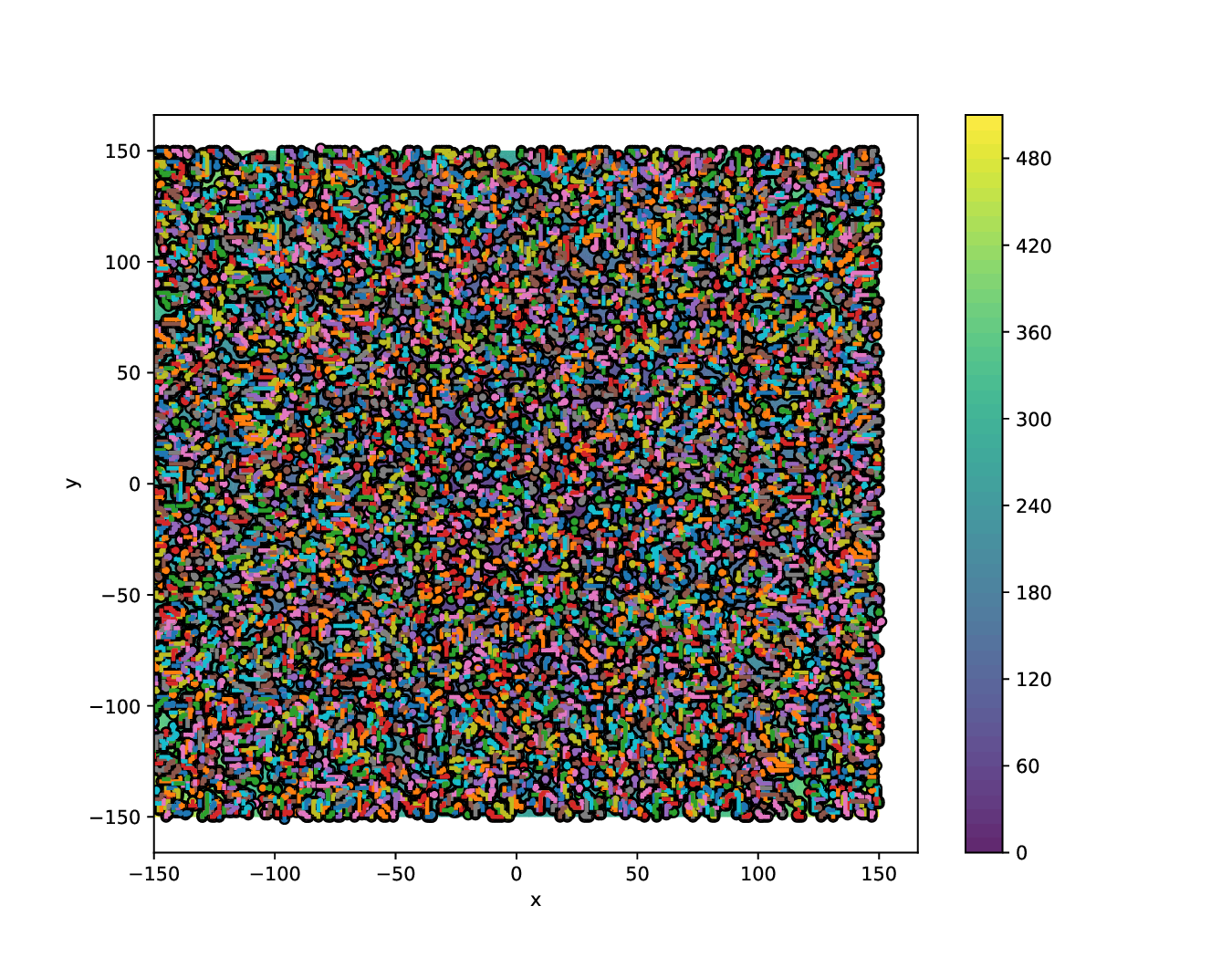}
\includegraphics[width=0.48\linewidth]{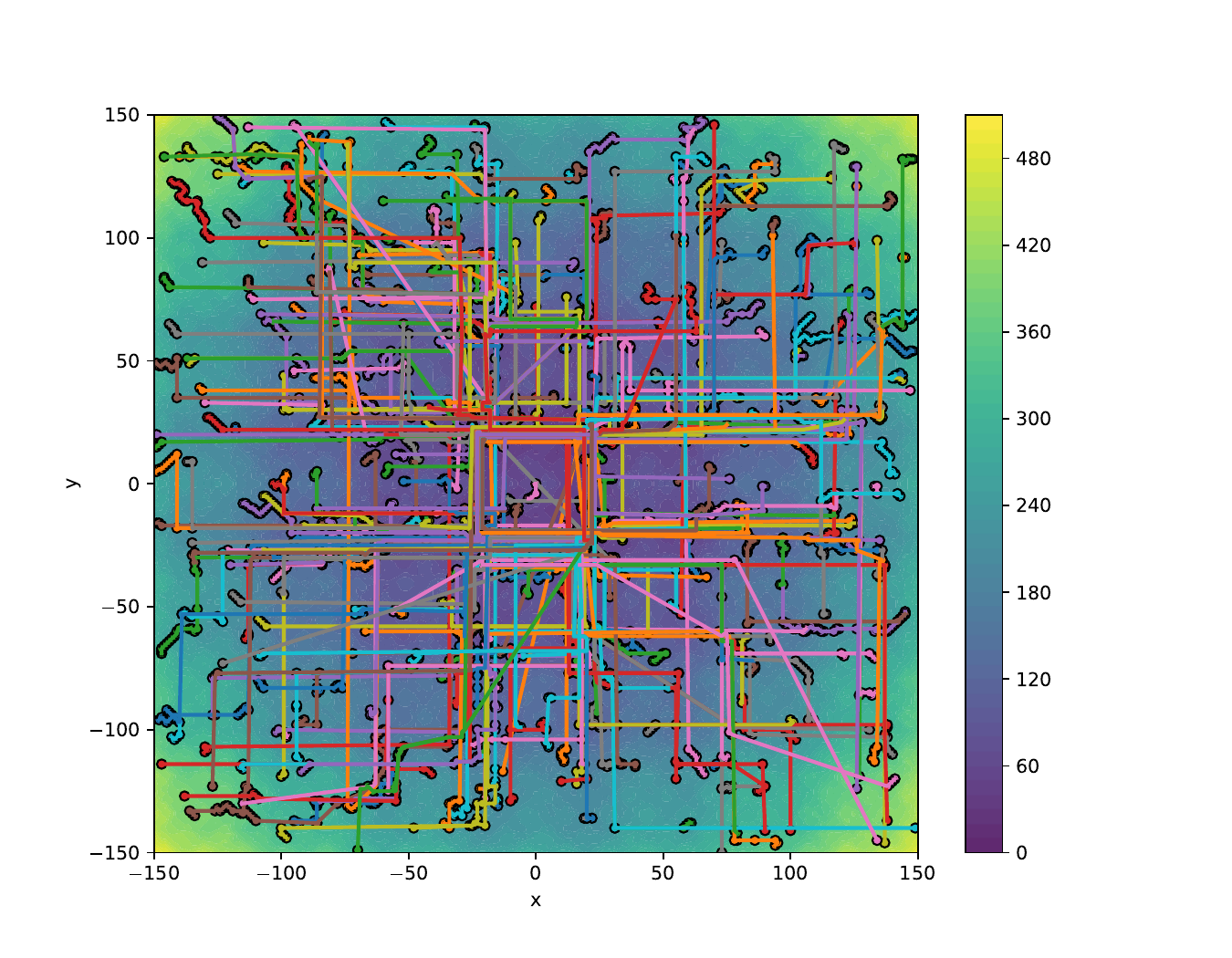}
\caption{Search trajectories of HC (left) and HC-SHIFT (right) on the contour map of the Rugged 2D landscape.}
\label{fig:case-rugged-2d}
\end{figure}

Overall, the pilot study confirms that SHIFT’s compression-driven adaptation yields markedly faster and more stable convergence than HC and GA, motivating its use in the following real-world experiments.

\newpage
\subsection{Benchmark Programs and Branch Extraction}



\paragraph{Benchmark Programs}
To evaluate the efficacy of the proposed sigmoid-based compression, we constructed and curated a benchmark suite of 38 Python programs comprising 343 branching points. Semantically unreachable branches—such as the \texttt{False} direction of a \texttt{range}-based \texttt{for}-loop—are excluded from the count. These programs were deliberately designed or selected to exhibit complex control-flow dependencies and structurally challenging fitness landscapes, including rugged terrains, extended plateaus, needle-in-a-haystack configurations, and combinations thereof—landscape types that are well-known to be problematic for conventional SBST algorithms. The inherent difficulty of these benchmarks provides a rigorous testbed for assessing the performance gains afforded by our approach.

The benchmark programs are broadly classified into the following categories based on the primary challenge each presents to a search algorithm; the distribution is summarized in Table~\ref{tab:benchmarks}. The \textit{Other} category encompasses programs with general control flow that do not fall into any of the specialized categories.

\begin{table}[H]
\centering
\begin{tabular}{lccc}
\toprule
\textbf{Category} & \textbf{\# Programs} & \textbf{\# Branch} \\
\midrule
Plateau & 4 & 40 \\
Rugged & 3 & 22 \\
Needle-in-a-haystack & 3 & 13 \\
Mixed and Complex & 3 & 42 \\
Other & 25 & 226 \\
\bottomrule
\end{tabular}
\caption{Benchmark categories and number of programs/branches.}
\label{tab:benchmarks}
\end{table}

Each category has its own special landscape feature, making it difficult to generate test cases automatically.
\begin{itemize}
    \item \textbf{Plateau Landscapes}
    
    Plateaus are regions in the search space where the fitness function yields a constant value, providing no gradient for a local search algorithm to follow. 


    \item \textbf{Rugged Landscapes}
    
    A rugged landscape is one where small changes to the input can cause large, unpredictable changes in the fitness value. This noise can mislead the search algorithm.


    \item \textbf{Needle-in-a-Haystack Landscapes}
    
    These problems feature a fitness landscape that is almost entirely flat, with a single, extremely narrow basin of attraction leading to the solution. The algorithm receives little to no guidance until it is very close to the target.


    \item \textbf{Mixed and Complex Landscapes}
    
    Several benchmarks combine multiple characteristics to create a composite challenge.

\end{itemize}

\paragraph{Branch Extraction}
For each benchmark program, we performed static analysis to identify all conditional branches connecting other program components. Each conditional statement (\texttt{if}, \texttt{while}, \texttt{for} and \texttt{match}) in the source code is identified as a branching point in the control-flow graph and stored to generate a test case for each branch. The goal of the test generation process is to generate inputs that cover these branches. The branch distance, a core component of the fitness function in SBST, is calculated at each conditional. For our experiment, we instrumented the Python code to report the outcome of each predicate, enabling the search algorithm to compute the fitness of a given input based on how closely it satisfied the conditions for a target branch. This instrumentation provides the necessary feedback for the search algorithm to navigate the program's control flow.

\subsection{Experimental Protocol}
\label{sec:protocol}

We evaluate three search algorithms—HC, HC-SHIFT, and GA—under a unified,
time-budgeted per-branch evaluation framework.  
This ensures fairness across algorithms with different iteration costs and
guarantees reproducibility through consistent initialization and random
seed management.

\subsubsection{Per-Branch Time Budget}
\label{sec:timebudget}

Each conditional branch is tested independently under a strict wall-clock
limit of $T = 20$ seconds.  
A fixed global random seed (42) is reapplied at the beginning of every 
branch evaluation.  
The timer starts immediately before the first call to the branch-level
fitness function.

In HC and HC-SHIFT, multiple restart trials are executed sequentially
until the branch-level time budget expires.  
Each trial begins from a newly sampled initial point and continues until
a local optimum or an internal stopping condition is reached.

In GA, the population is initialized once and evolved continuously.
Before evaluating any individual, a time check is performed; if the time
budget is exceeded, GA stops immediately and returns the best individual
observed so far.  
This strict time-guarding ensures a fair comparison between HC, 
HC-SHIFT, and GA.

\subsubsection{Initialization Strategies}
\label{sec:init}

All algorithms share the same input initialization mechanism, applied
per trial (HC, HC-SHIFT) or per individual (GA).  
Two initialization modes are supported:

\paragraph{Random Initialization.}
Each input variable is sampled uniformly from the automatically inferred
domain $[L,H]$ derived from program constants.

\paragraph{Biased Initialization (Default).}
We extract variable-specific and global constants from the program’s AST
and sample values using a mixture model:
\[
x_0 \sim 
\begin{cases}
\mathsf{Uniform}(L,H), & \text{with prob.\ } 0.2,\\[4pt]
\mathsf{Gaussian}(c,\sigma^2), 
    & \text{with prob.\ } 0.8,\ c\in\text{Constants},
\end{cases}
\]
with $\sigma$ set to $1\%$ of the domain range.

\subsubsection{Metrics}
\label{sec:metrics}

For every branch and target outcome, we record one CSV row per algorithm.
HC and HC-SHIFT use the following fields:

\begin{center}
\texttt{function, lineno, outcome, convergence\_speed, nfe,}\\
\texttt{best\_fitness, best\_solution, success, num\_trials,}\\
\texttt{total\_time, time\_to\_solution}
\end{center}

GA uses the same fields plus:

\begin{center}
\texttt{generations}
\end{center}

The fields have the following meaning:

\begin{itemize}[nosep]
    \item \textbf{function}: name of the function containing the target branch.
    \item \textbf{lineno}: branch identifier (bid) assigned during
          instrumentation (Section~\ref{sec:instrumentation}).
    \item \textbf{outcome}: target direction (\texttt{True} or \texttt{False}).
    \item \textbf{best\_fitness}: minimum value of $(AL + \mathrm{nBD})$.
    \item \textbf{best\_solution}: input that achieved the best fitness.
    \item \textbf{success}: \texttt{True} iff \texttt{best\_fitness} is zero.
    \item \textbf{convergence\_speed}:  
          for HC, the number of accepted moves in $X$-space;  
          for HC-SHIFT, the number of accepted moves in $Z$-space;  
          for GA, the generation index where the best individual first appears.
    \item \textbf{nfe}: number of fitness evaluations.
    \item \textbf{num\_trials}:  
          for HC and HC-SHIFT, the number of restart trials attempted;  
          for GA, the number of evaluated individuals.
    \item \textbf{generations} (GA only): number of completed generations.
    \item \textbf{total\_time}: wall-clock time used (capped at $T$).
    \item \textbf{time\_to\_solution}: time until a successful input was found.
\end{itemize}

\newpage
\subsubsection{Hyperparameter Settings}
\label{sec:hyperparams}

Table~\ref{tab:hyperparams} lists the hyperparameters used across all
algorithms.

\begin{table}[H]
\centering
\begin{tabular}{lll}
\toprule
\textbf{Component} & \textbf{Parameter} & \textbf{Value} \\
\midrule
\multirow{4}{*}{Common} 
    & Time budget per branch $T$ & $20\ \mathrm{s}$ \\
    & Random seed & $42$ \\
    & Input domain $[L,H]$ & auto-detected (AST constants) \\
    & Initialization mode & biased \\
\midrule
\multirow{2}{*}{HC}
    & Max steps per trial & $2000$ \\
    & Move set & $\pm 1$ per dimension \\
\midrule
\multirow{3}{*}{HC-SHIFT}
    & Max iterations per dimension & $10$ \\
    & Basin max search per dimension & $1000$ \\
    & Active-dimension updates & enabled \\
\midrule
\multirow{4}{*}{GA}
    & Population size $N$ & $10000$ \\
    & Tournament size $k$ & $3$ \\
    & Elite ratio & $0.1$ \\
    & Mutation steps & $\{-3,-2,-1,1,2,3\}$ \\
\bottomrule
\end{tabular}
\caption{Hyperparameter settings for HC, HC-SHIFT, and GA.}
\label{tab:hyperparams}
\end{table}

\paragraph{Parameter Selection.}
These hyperparameters were determined through preliminary tuning experiments
on representative benchmark subsets to ensure fair comparison:

\begin{itemize}[nosep]
    \item \textbf{GA population size:} Increased from 1,000 to 10,000 to
          provide sufficient diversity for exploring complex fitness
          landscapes within the 20-second time budget.
    
    \item \textbf{HC-SHIFT basin detection:} Tested configurations
          $(10, 10000)$ and $(100, 100)$ for (max iterations per dimension,
          basin max search per dimension). The selected $(10, 1000)$
          configuration achieved the best balance between exploration depth
          and computational efficiency.
    
    \item \textbf{HC parameters:} Retained standard settings ($\pm 1$ moves,
          2000 max steps) as preliminary experiments confirmed adequate
          performance as a baseline method.
\end{itemize}

These settings serve as the default configuration across all experiments.
Individual experiments (Section~\ref{sec:experiment}) may vary specific
parameters to investigate particular research questions; such variations
are explicitly noted in the corresponding experiment descriptions.

\subsection{Experiment 1: Are Random Restarts Sufficient?}
\label{sec:exp1}

A natural question in search-based testing is whether simply increasing
the number of random restart trials is sufficient to achieve high coverage
within a fixed time budget.  
This experiment addresses that question by comparing the three algorithms
introduced in Section~\ref{sec:approach}—HC, GA, and HC-SHIFT—across all
benchmark programs in the \texttt{./benchmark} directory, totaling 38 source
files and 3,800 branch outcomes.

Each branch is tested independently under a strict per-branch time budget
of 20 seconds (Section~\ref{sec:timebudget}).  
Coverage is computed as
\[
\mathrm{Coverage}
= \frac{\text{\# branches covered}}{3800} \times 100\%.
\]
Both initialization modes—biased and pure random (Section~\ref{sec:init})—were
evaluated for all algorithms.

Table~\ref{tab:exp1-coverage} reports the aggregated results.

\begin{table}[H]
\centering
\begin{tabular}{lccc}
\toprule
\textbf{Algorithm} & \textbf{Init.} & \textbf{Coverage} & \textbf{Runtime} \\
\midrule
HC-SHIFT 
  & biased 
  & 3642 / 3800 (95.84\%) 
  & 3m 39s \\
HC-SHIFT
  & random 
  & 3569 / 3800 (93.92\%) 
  & 4m 13s \\
HC
  & biased 
  & 3586 / 3800 (94.42\%) 
  & 4m 32s \\
HC
  & random 
  & 3519 / 3800 (92.60\%) 
  & 5m 03s \\
GA
  & biased 
  & 3519 / 3800 (92.60\%) 
  & 4m 55s \\
GA
  & random 
  & 3491 / 3800 (91.87\%) 
  & 4m 54s \\
\bottomrule
\end{tabular}
\caption{Coverage comparison under a 20-second per-branch time budget.}
\label{tab:exp1-coverage}
\end{table}

\paragraph{Key Findings.}
HC-SHIFT achieves the highest coverage across all configurations.  
With biased initialization it reaches \textbf{95.84\%}, outperforming both
HC and GA.  
Even under pure random initialization, it maintains strong performance
(\textbf{93.92\%}), demonstrating robustness to initialization quality.

\paragraph{Trial Efficiency: The Answer to Random Restarts.}
The critical distinction emerges when examining the number of trials required
to achieve coverage.  
Table~\ref{tab:benchmark_challenging} presents selected challenging benchmarks
(full results in Appendix~\ref{appendix:benchmark_res}) where HC-SHIFT
typically converges in \textbf{2--8 trials on average}, whereas HC and GA
require \textbf{hundreds of thousands of trials} and still fail to achieve
full coverage on plateau-dominated functions.

This efficiency gap demonstrates that \emph{random restarts alone are
insufficient}.  
Both HC and GA repeatedly sample similar regions without learning from
prior failures, exhausting their time budget through redundant exploration.

\begin{table*}[!htbp]
\hspace*{0cm}   
{
\centering
\resizebox{\textwidth}{!}{%
\small
\begin{tabular}{l|ccc|ccc|ccc}
\toprule
\textbf{Sub Benchmark}
& \multicolumn{3}{c|}{\textbf{HC-SHIFT}}
& \multicolumn{3}{c|}{\textbf{HC}}
& \multicolumn{3}{c}{\textbf{GA}} \\
& Avg.\# & Avg.t (s) & Cov. & Avg.\# & Avg.t (s) & Cov.  & Avg.\# & Avg.t (s) & Cov. \\
\midrule
count\_divisor\_2 & 2.58 & 4.145 & 100\% & 8312.83 & 6.674 &  88\% & 28765.00 & 6.829 &  88\%\\
mixed\_case & 1.00 & 1.672 & 100\% & 13.67 & 5.782 &  96\% & 141325.50 & 8.451 &  78\%\\
plateau2 & 3.75 & 0.572 & 100\% & 106216.25 & 5.005 &  88\% &  233190.88 & 5.056 &  88\%\\
plateau\_case & 8.50 & 4.167 & 100\% & 86845.62 & 10.057 &   83\% & 267704.38 & 12.584 &  57\%\\
\bottomrule
\end{tabular}
}%
\caption{Selected challenging benchmarks showing HC-SHIFT's superior 
trial efficiency (subset of full results in Appendix~\ref{appendix:benchmark_res}).}
\label{tab:benchmark_challenging}
}
\end{table*}

\paragraph{Mechanisms Enabling HC-SHIFT's Efficiency.}
HC-SHIFT's superior performance arises from the interaction of several 
complementary mechanisms. First, the basin detection phase 
(Section~\ref{sec:compression}(b)) explicitly identifies the geometric 
structure of flat or weakly varying regions, in contrast to HC and GA, 
which continue probing blindly. Once such basins are detected, the 
invertible SHIFT compression (Section~\ref{sec:compression}(c)) contracts 
these plateaus into short intervals, enabling the algorithm to traverse 
areas that normally induce stagnation with only a few steps in the 
compressed space. 

The restart policy further amplifies this effect: instead of restarting 
from arbitrary points, HC-SHIFT deliberately evaluates basin-boundary 
locations (Section~\ref{sec:compression}(f), 
Algorithm~\ref{alg:restart-selection}), thereby exploring meaningful 
escape routes from local minima. Finally, compression metadata is preserved 
across trials (Section~\ref{sec:compression}(d--e)), allowing information 
about previously explored regions to accumulate rather than being discarded. 
Together, these mechanisms convert naive random restarts into 
\emph{informed restarts}, leading to higher coverage and markedly
reduced trial counts under the same time budget.

\paragraph{Impact of Initialization Strategies.}
Initialization strategy also affects performance, though to varying degrees 
across algorithms. Biased initialization (Section~\ref{sec:init}) benefits 
all methods, particularly GA, whose early generations depend heavily on the 
quality of initial samples. HC-SHIFT, however, shows the least sensitivity 
to initialization: even when starting from disadvantageous positions, the 
combination of basin detection and compression rapidly reshapes the 
landscape, steering the search toward productive areas. This robustness 
underscores that HC-SHIFT's improvements stem primarily from its landscape 
transformation rather than from favorable initialization.

\newpage
\subsection{Performance on Complex Landscapes: Plateau and Rugged Cases}

A central contribution of HC-SHIFT is its ability to reshape the fitness landscape in order to escape local optima and traverse plateaus efficiently. We evaluate this capability using parameterized benchmark functions designed to represent two archetypal landscape challenges common in software testing: stubborn local optima (Rugged) and large flat regions (Plateau).

\subsubsection{Benchmark Specification}
To rigorously evaluate the algorithms, we designed parameterizable benchmark functions that allow precise control over landscape difficulty—specifically, the length of a plateau or the frequency of local optima.

\paragraph{Plateau Benchmark: \texttt{schedule\_cycle}.}
This function simulates a needle-in-a-haystack scenario common in scheduling logic, where a specific condition must be satisfied within a large, flat search space.
\begin{lstlisting}[language=Python]
def schedule_cycle_N(timestamp: int) -> bool:
    # A = cycle length (e.g., 7 days), B = granularity (e.g., 3600s)
    if (timestamp // 86400) % A == (A - 1):
        if (timestamp % 86400) // B == C:
            return True 
    return False
\end{lstlisting}
By adjusting parameters $A$ (cycle length in days) and $B$ (granularity), we control the plateau length. A larger cycle or finer granularity produces a vast region with zero gradient information ($\mathrm{fitness}=0$ or constant), rendering standard gradient-based guidance ineffective.

\paragraph{Rugged Benchmark: \texttt{rugged\_period}.}
This function introduces periodic noise to a simple linear distance function, creating multiple local optima (peaks and valleys) that trap gradient-based methods.
\begin{lstlisting}[language=Python]
import math
TARGET = 31415

def rugged_period_N(x: int):
    period = N  # Controls ruggedness frequency
    d = abs(x - TARGET)
    # Add sinusoidal noise to create local optima
    noise = int(10 * abs(math.sin((x - TARGET) / period)))
    g = d + noise 
    if g == 0:
        return 0 # Success
    else:
        return g # Fitness distance
\end{lstlisting}

The \texttt{period} parameter governs the density of local optima: a smaller value produces high-frequency noise with many peaks, while a larger value yields fewer, broader peaks.

\subsubsection{Performance Analysis}

\begin{figure}[H]
    \centering
    \includegraphics[width=0.75\textwidth]{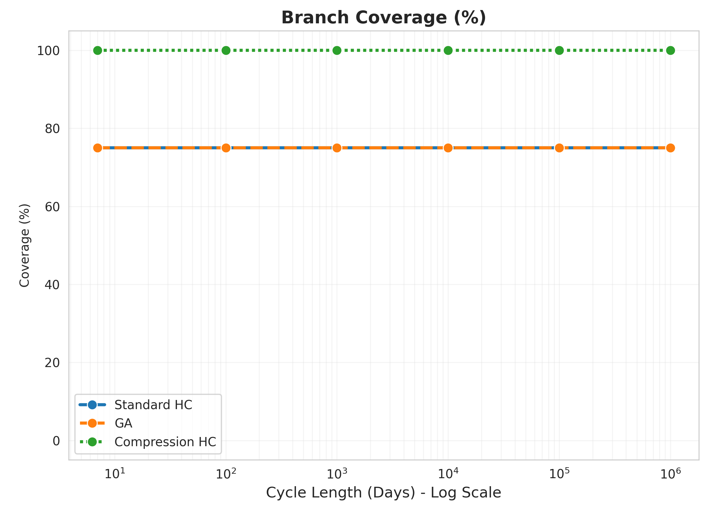}
    \caption{Performance comparison on plateau case}
    \label{fig:plateau}
\end{figure}

\paragraph{Plateau Analysis.}
As shown in Figure~\ref{fig:plateau}, standard HC fails on benchmarks with long plateau landscapes owing to the absence of gradient guidance. HC-SHIFT, by contrast, maintains full coverage regardless of plateau length. This robustness stems from its compression logic, which effectively ``folds'' flat regions—identified by equal fitness values—treating the entire plateau as a compact interval in the compressed space.

\begin{figure}[H]
    \centering
    \includegraphics[width=0.75\textwidth]{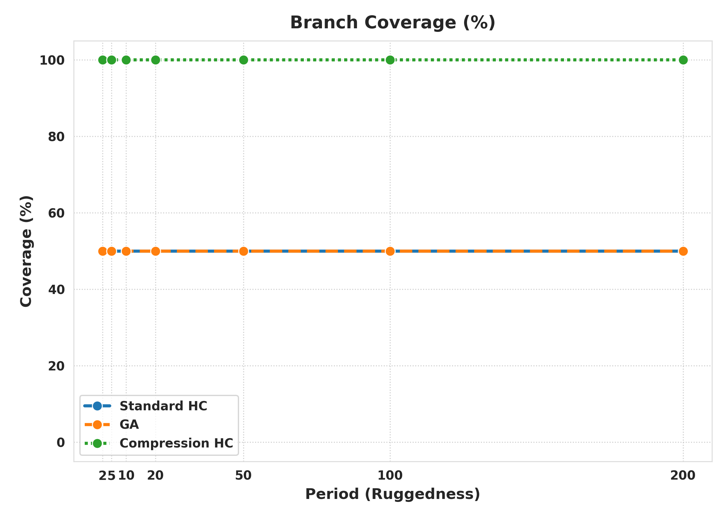}
    \caption{Performance comparison on rugged case}
    \label{fig:rugged}
\end{figure}

\paragraph{Rugged Analysis.}
Figure~\ref{fig:rugged} illustrates the effect of landscape ruggedness, which is governed by the \texttt{period} parameter.

By varying this parameter, we control the number of local optima (peaks) within a fixed landscape extent, allowing us to measure each algorithm’s ability to escape local traps.

Standard HC and GA become trapped at the first local optimum they encounter. HC-SHIFT, by contrast, succeeds in these scenarios by identifying the basin of attraction of the local optimum, applying sigmoid compression to contract it, and stepping past the peak to resume descent toward the target.

The compression phase does incur additional computational cost, and failures were observed when the 20-second time budget proved insufficient. Nevertheless, given adequate time, HC-SHIFT achieves 100\% coverage across all tested configurations, as shown in Figure~\ref{fig:rugged}.

\subsection{Experiment 3: Necessity of `Active' Dimension Strategy}

While compression is effective, it incurs computational cost. To prevent time-budget exhaustion caused by compressing irrelevant dimensions—as observed in the high-frequency rugged case—we employ the \texttt{active\_dim} strategy, which selectively restricts compression to dimensions that meaningfully influence fitness.

\subsubsection{Benchmark Specification}
To assess the robustness of HC-SHIFT under increasing input dimensionality, we selected a benchmark that all algorithms solve easily and augmented it with irrelevant input variables ranging from 1 to 200.

\begin{lstlisting}[language=Python]
def active_easy_100(x0: int, x1: int, x2: int, ...,
                    x98: int, x99: int):
    # content of arbitrary1.py
    return x
\end{lstlisting}

The growing number of irrelevant dimensions introduces spurious compression candidates, potentially exhausting the time budget on dimensions that do not contribute to fitness improvement.

\subsubsection{Performance Analysis}

\begin{figure}[H]
    \centering
    \includegraphics[width=0.8\textwidth]{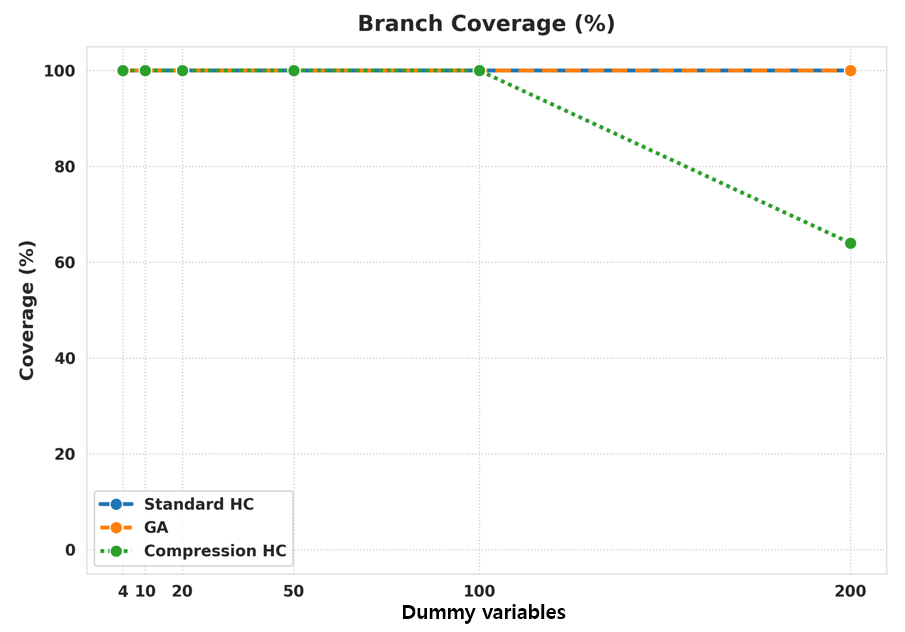}
    \caption{Effectiveness of Active Dimension logic against increasing irrelevant variables.}
    \label{fig:active_dim}
\end{figure}

As shown in Figure~\ref{fig:active_dim}, HC-SHIFT maintains 100\% coverage on the vast majority of test configurations. The \texttt{active\_dim} logic successfully filters out non-contributing dimensions, preventing unnecessary compression attempts and concentrating the time budget on the relevant subspace. A single failure is observed at 200 irrelevant variables, where the time budget is insufficient for the dimension-deactivation phase to complete. Nevertheless, HC-SHIFT remains consistently more robust than standard HC and GA, particularly on plateau and rugged landscapes.

\subsection{Experiment 4: Seed Dependency Test}
\label{sec:exp4}

To evaluate the robustness of each method with respect to initialization randomness, 
we conduct a seed dependency test. 
Each method is executed with seeds ranging from 41 to 50, and for every run we measure 
the total execution time and final coverage.  
We report the mean, standard deviation, and coefficient of variation (CV = std/mean) 
to quantify each method's sensitivity to seed variation.

\begin{table}[H]
\centering
\resizebox{\textwidth}{!}{%
\setlength{\tabcolsep}{8pt}
\renewcommand{\arraystretch}{1.2}
\begin{tabular}{lcccccc}
\toprule
\textbf{Method} 
& \textbf{Time mean} 
& \textbf{Time std} 
& \textbf{Time CV} 
& \textbf{Cov mean} 
& \textbf{Cov std} 
& \textbf{Cov CV} \\
\midrule
SHIFT (biased)  & 228.50 & 20.97 & 0.092 & 95.78 & 0.18 & 0.002 \\
HC (biased)     & 272.60 & 3.67  & 0.013 & 94.37 & 0.00 & 0.000 \\
GA (biased)     & 292.20 & 5.23  & 0.018 & 92.67 & 0.13 & 0.001 \\
\midrule
SHIFT (random)  & 267.70 & 22.27 & 0.083 & 94.03 & 0.41 & 0.004 \\
HC (random)     & 305.50 & 1.86  & 0.006 & 92.89 & 0.23 & 0.002 \\
GA (random)     & 303.80 & 8.74  & 0.029 & 91.66 & 0.42 & 0.005 \\
\bottomrule
\end{tabular}
}%
\caption{Seed dependency results for biased and random initialization.  
CV denotes the coefficient of variation.}
\label{tab:seed-dependency}
\end{table}

\paragraph{Results and Analysis.}

Across all configurations, SHIFT achieves the fastest average execution time and the highest coverage, 
demonstrating clear performance advantages over HC and GA.  
Although SHIFT exhibits greater variance than HC, its substantial gains in both coverage and convergence speed 
far outweigh this variability.

HC shows extremely low variance---a consequence of its deterministic behavior---but it consistently underperforms 
SHIFT in both efficiency and effectiveness.  
GA delivers the weakest performance, with lower coverage and slower execution than both HC and SHIFT, and its higher 
variance (especially under random initialization) indicates strong seed dependency.

These results indicate that SHIFT is the only method that simultaneously achieves high coverage, fast convergence, and robustness across seeds, whereas HC trades performance for stability and GA exhibits neither stability nor competitive performance.

\section{Limitations and Future Works}

Although the proposed sigmoid-based compression framework demonstrates clear benefits on rugged and plateau-heavy fitness landscapes, several limitations remain. Addressing these limitations will broaden the applicability of our method and facilitate deeper integration with a wider range of search and optimization problems.

\subsection*{Limited Input Domain and Program Support}

The current implementation is restricted to Python programs that accept \textit{integer} inputs. While this constraint simplified the design of the compression module, it considerably narrows the scope of programs amenable to analysis. Many real-world testing targets involve floating-point inputs, mixed numeric types, or structured data such as lists and records. Extending the compression scheme to operate reliably on continuous domains—in particular, floating-point inputs—is a direct and necessary next step.

Because floating-point domains introduce smoother yet potentially more deceptive search surfaces, the compression function may require additional refinements, such as adaptive scaling or stability constraints, to ensure consistent behavior. Future work could generalize the model to richer input types and more realistic program structures, thereby increasing the external validity of our approach.

\subsection*{Adaptive Maximum Compression Length}

A key limitation of the current SHIFT framework is its reliance on a statically chosen maximum compression length. Because this cap is fixed beforehand, the model cannot fully exploit plateaus or wide basins that exceed the preset threshold, nor can it scale down compression when the terrain becomes narrow or irregular. This mismatch keeps SHIFT from fully adapting to the true geometry of each search region, leaving potential performance gains unused and forcing manual hyperparameter tuning.

A promising direction for addressing this issue is to make compression length adaptive through lightweight symbolic analysis performed before hill climbing. By examining guard conditions and extracting approximate constraint ranges for each variable, the model can predict where broad plateaus or complex basin structures are likely to appear. These symbolic estimates provide an informed prior on the landscape width, allowing the advanced SHIFT model to dynamically adjust its maximum compression length, expanding it for wide, flat regions and tightening it around constrained or nonlinear areas. This integration would make SHIFT genuinely landscape-aware and more robust across diverse programs without requiring hand-engineered compression parameters.

\subsection*{Applicability Beyond SBST: Hyperparameter Optimization}

Our empirical analysis shows that compression significantly improves search in landscapes dominated by abrupt cliffs and large plateaus. These properties are also typical in hyperparameter optimization for machine learning or deep learning models, where many influential parameters, such as layer counts, batch sizes, or tree depths, are discrete and yield highly discontinuous fitness landscapes.

Given this similarity, our compression framework can naturally extend to \emph{ML hyperparameter optimization}. Traditional hill climbing, greedy tuning, and random search methods often stall in local optima due to the lack of informative local gradients. By compressing the hyperparameter space and reducing the prevalence of unproductive local optima, our method may provide a lightweight yet effective enhancement to existing search loops. Integrating the compression transformation into general hyperparameter tuning pipelines represents a promising direction for future experimentation.

Overall, these limitations highlight meaningful paths forward. By expanding input support, applying the model to other optimization domains, or even integrating it with a wider class of metaheuristic algorithms, we aim to develop a general-purpose landscape compression framework that enhances search efficiency across diverse applications.

\section{Conclusion}

This work addresses a practical weakness of Search-Based Software Testing: traditional hill climbing stalls on rugged landscapes and extended plateaus because the fitness function provides little or no informative gradient. Our core motivation is straightforward—reshape the landscape rather than repeatedly failing on it. To that end, we introduce SHIFT, an invertible sigmoid-based compression framework that contracts flat or weakly varying regions, exposes actionable gradients, and enables even simple hill climbing to escape stagnation.

The core idea behind SHIFT is equally simple: detect basins, compress them, and restart outside. The algorithm first runs hill climbing in the original space, performs bidirectional basin detection to locate contiguous plateaus or near-flat regions, applies a smooth and fully invertible warping to shrink these basins in a separate coordinate space, and then continues the search using the compressed geometry. Along the way, SHIFT tracks active dimensions to avoid wasting time on irrelevant variables and accumulates compression metadata across restarts, effectively learning the landscape as it progresses.

The experiments validate the design across synthetic, constructed, and real benchmark programs. On synthetic landscapes, SHIFT consistently required fewer trials and evaluations than standard hill climbing, with the performance gap widening in higher dimensions. It handled needle-like, plateau-dominated, and rugged multimodal terrains with high reliability, while GA and HC both deteriorated or failed outright in the more extreme cases. On realistic SBST targets, SHIFT achieved the highest overall coverage under identical time budgets, showing strong gains on programs with structurally difficult branch conditions. Experiments on plateau scaling, ruggedness scaling, irrelevant-dimension injection, and seed dependency all reinforce the same conclusion: SHIFT remains effective and robust where conventional methods lose traction.

At the same time, several limitations remain. The current design fixes the maximum compression length statically and supports only integer-valued input domains. Moreover, the compression step can become expensive when many dimensions are present or when rugged regions require repeated detection cycles. Addressing these constraints—through adaptive compression length, symbolic pre-analysis, and broader input-type support—represents a clear path forward.

Overall, SHIFT demonstrates that landscape transformation is a lightweight yet powerful means of rehabilitating search in environments where gradient signals are easily misleading. By reshaping the geometry rather than redesigning the search operators, the method enables simple hill climbing to perform competitively on problems where it would ordinarily stall, offering a practical and extensible tool for SBST and, potentially, for broader classes of discrete optimization.
\bibliographystyle{plainnat}
\bibliography{refs}

\newpage
\appendix

\section{Pilot Study(~\ref{sec:pilot-exp}): Full Synthetic Fitness Landscapes}
\label{appendix:synthetic-landscapes}

\begin{figure}[H]
\centering
\begin{subfigure}{0.48\linewidth}
    \centering
    \includegraphics[width=\linewidth]{work_together/fig/test3/fitness/landscape_needle_1d.pdf}
\end{subfigure}
\hfill
\begin{subfigure}{0.48\linewidth}
    \centering
    \includegraphics[width=\linewidth]{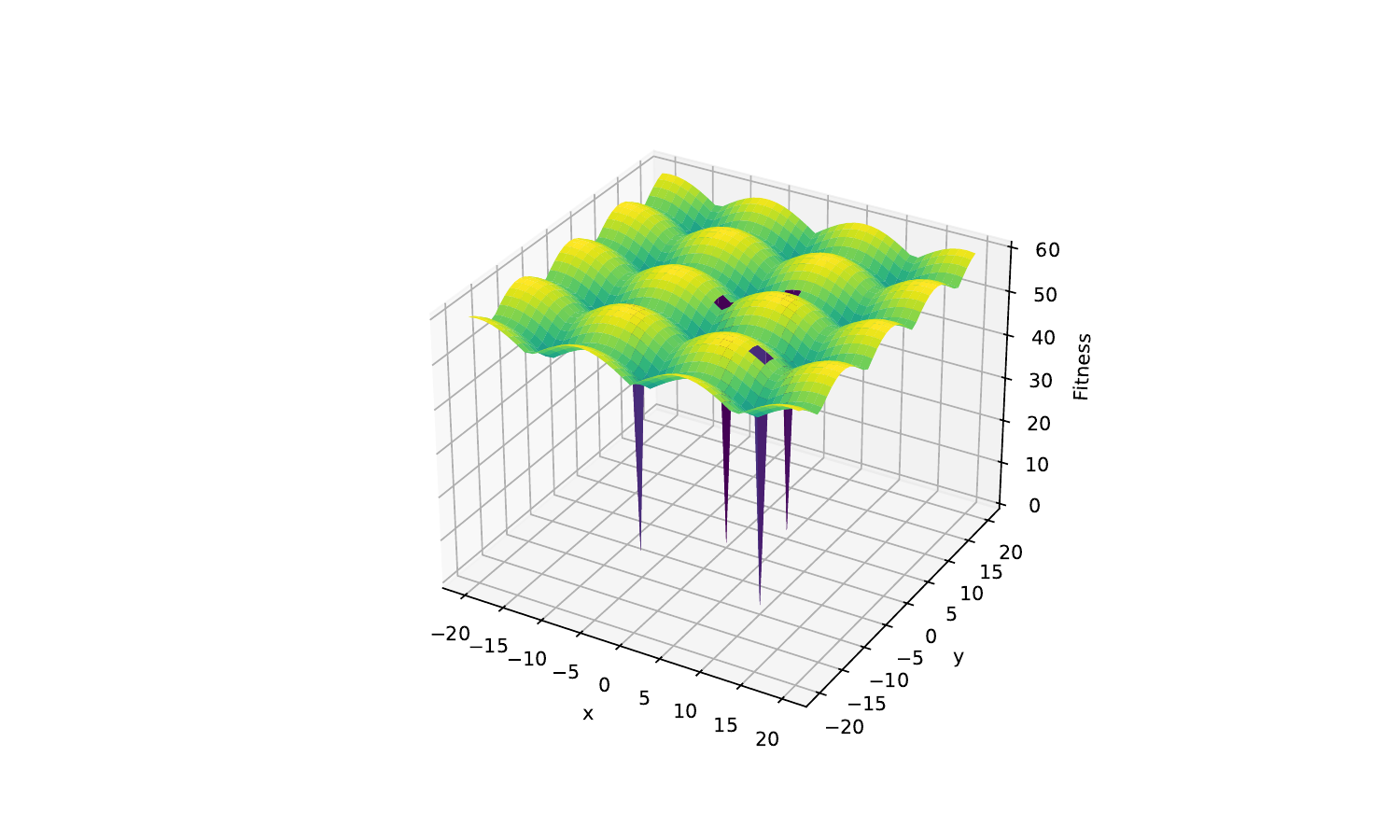}
\end{subfigure}

\caption{Needle landscapes in 1D and 2D.}
\label{fig:appendix-needle}
\end{figure}

\begin{figure}[H]
\centering
\begin{subfigure}{0.48\linewidth}
    \centering
    \includegraphics[width=\linewidth]{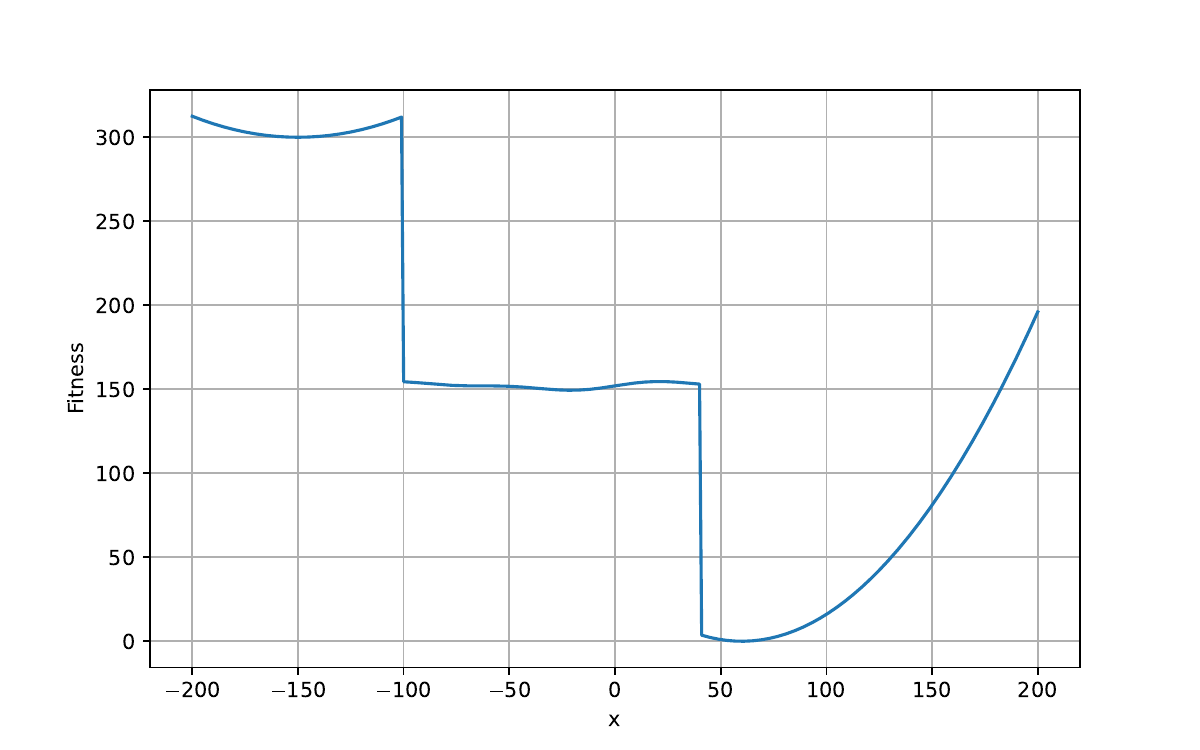}
\end{subfigure}
\hfill
\begin{subfigure}{0.48\linewidth}
    \centering
    \includegraphics[width=\linewidth]{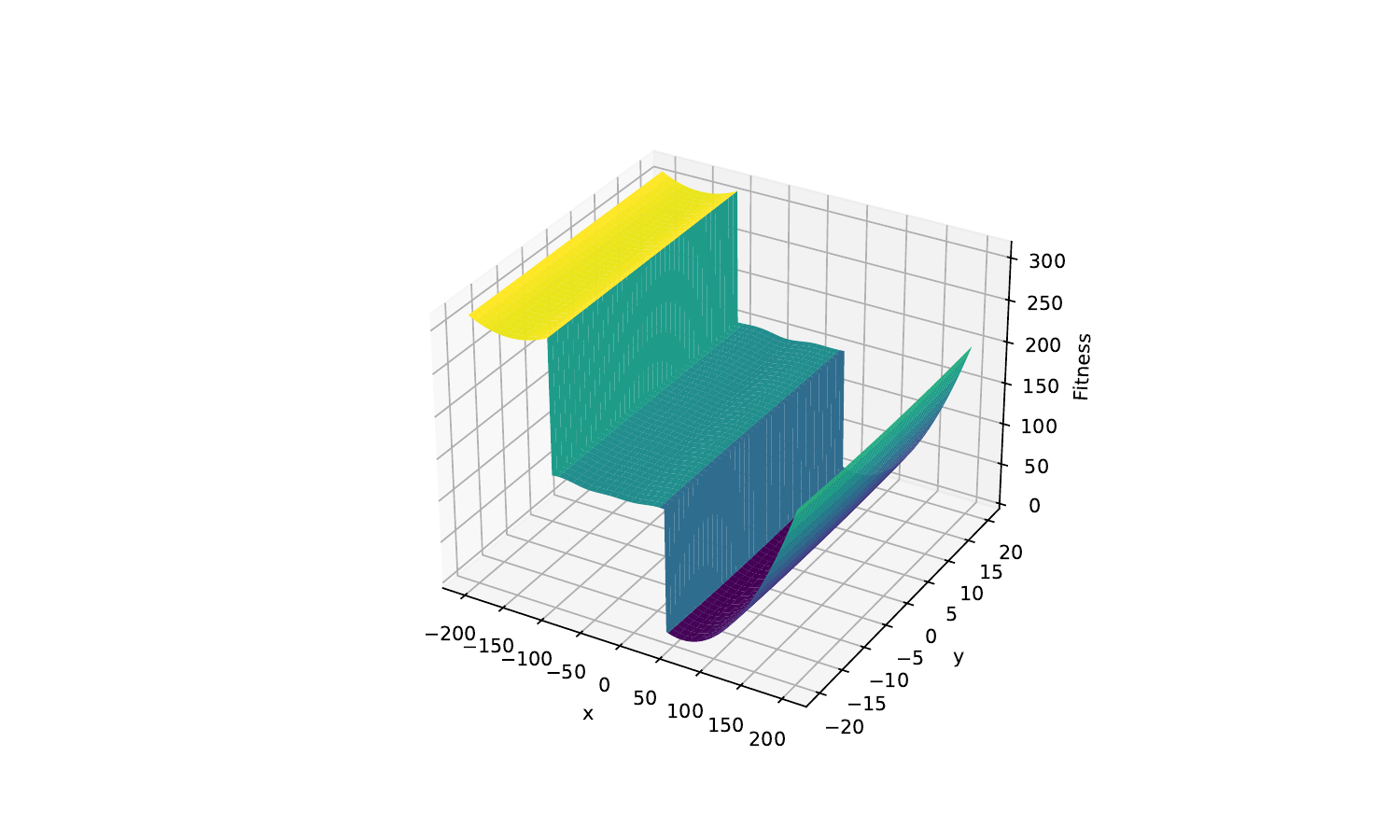}
\end{subfigure}

\caption{Plateau landscapes in 1D and 2D.}
\label{fig:appendix-plateau}
\end{figure}

\begin{figure}[H]
\centering
\begin{subfigure}{0.48\linewidth}
    \centering
    \includegraphics[width=\linewidth]{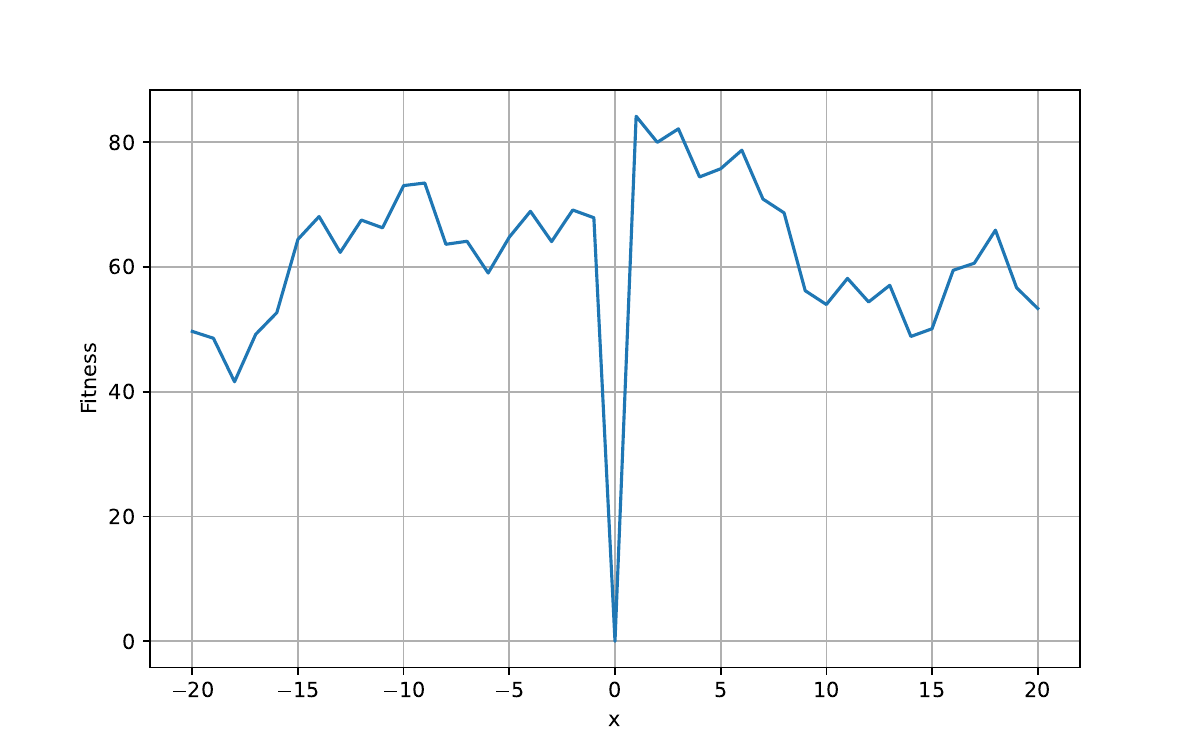}
\end{subfigure}
\hfill
\begin{subfigure}{0.48\linewidth}
    \centering
    \includegraphics[width=\linewidth]{work_together/fig/test3/fitness/landscape_rugged_2d.pdf}
\end{subfigure}

\caption{Rugged landscapes in 1D and 2D.}
\label{fig:appendix-rugged}
\end{figure}

\begin{figure}[H]
\centering
\begin{subfigure}{0.48\linewidth}
    \centering
    \includegraphics[width=\linewidth]{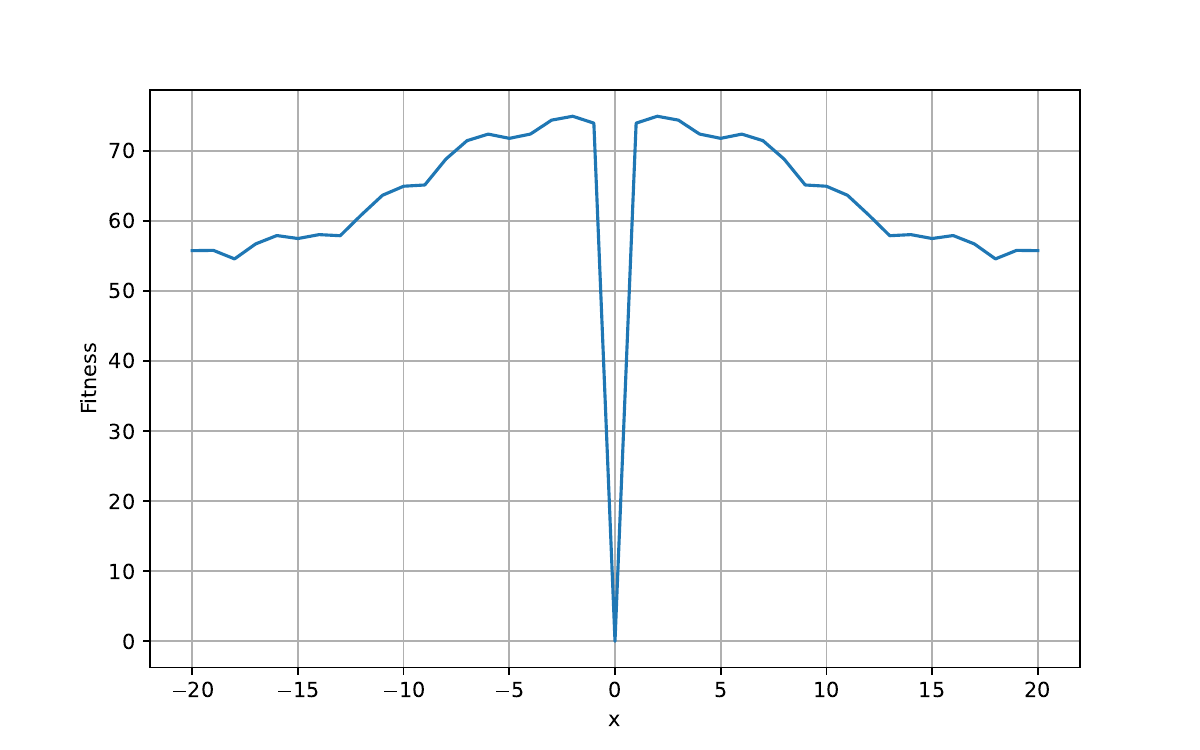}
\end{subfigure}
\hfill
\begin{subfigure}{0.48\linewidth}
    \centering
    \includegraphics[width=\linewidth]{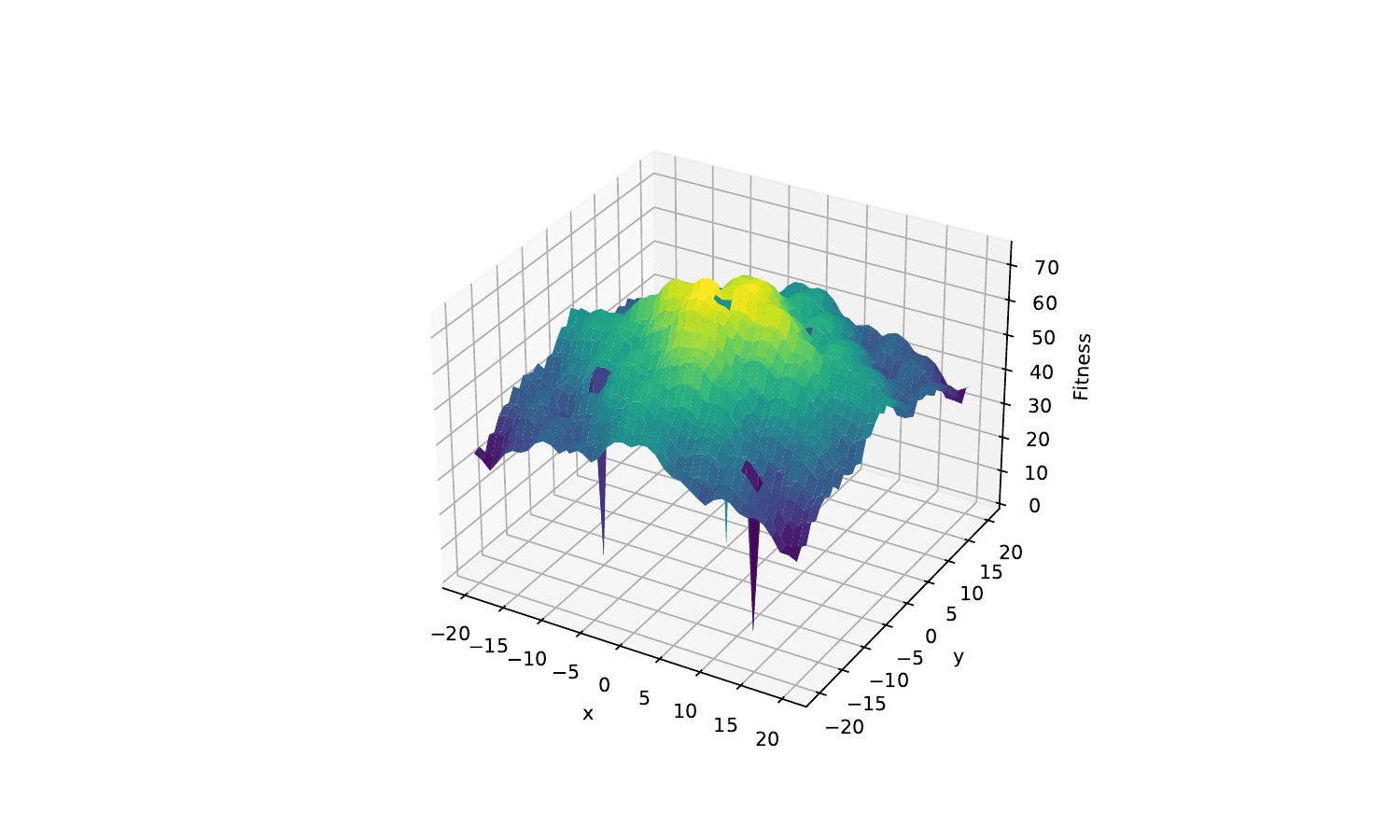}
\end{subfigure}

\caption{Combined landscapes in 1D and 2D.}
\label{fig:appendix-combined}
\end{figure}

\section{Pilot Study(~\ref{sec:pilot-exp}): Hyperparameter Settings}
\begin{table}[H]
\centering
\begin{tabular}{lcc}
\toprule
\textbf{Component} & \textbf{Parameter} & \textbf{Value} \\
\midrule
\multirow{4}{*}{Common} 
    & Time budget per branch $T$ & $20\ \mathrm{s}$ \\
    & Random seed & $42$ \\
    & Input domain $[L,H]$ & auto-detected (AST constants) \\
    & Initialization mode & biased \\
\midrule
\multirow{2}{*}{HC}
    & Max steps per trial & $200$ \\
    & Basin max search & $100$ \\
    & Move set & $\pm 1$ per dimension \\
\midrule
\multirow{3}{*}{HC-SHIFT}
    & Max iterations per dimension & $10$ \\
    & Basin max search per dimension & $100$ \\
    & Active-dimension updates & enabled \\
\midrule
\multirow{4}{*}{GA}
    & Population size $N$ & $1000$ \\
    & Tournament size $k$ & $3$ \\
    & Elite ratio & $0.1$ \\
    & Mutation steps & $\{-3,-2,-1,1,2,3\}$ \\
\bottomrule
\end{tabular}
\caption{Hyperparameter settings for HC, HC-SHIFT, and GA.}
\label{tab:pilot-hyperparams}
\end{table}

\section{Pseudocode for Baseline HC (Section~\ref{sec:hc})}

\begin{algorithm}[H]
\caption{Simple $n$-Dimensional Hill Climbing (HC)}
\label{alg:hc}
\begin{algorithmic}[1]

\Require Fitness function $F$, start point $x$, dimension $n$, max steps $K$, time limit $t_{\max}$.
\Ensure Trajectory of visited points

\State Check time limit; \textbf{if} exceeded, \Return $[(x, \infty)]$
\State $f \gets F(x)$, traj $\gets [(x, f)]$

\For{$k = 1$ to $K$}
    \State Check time limit; \textbf{if} exceeded, \Return traj

    \State $\mathcal{N} \gets \emptyset$ \Comment{axis-aligned neighbors}
    \For{$d = 1$ to $n$}
        \State Check time limit; \textbf{if} exceeded, \Return traj
        \State Add $(x_1,\ldots,x_d - 1,\ldots,x_n)$ to $\mathcal{N}$
        
        \State Check time limit; \textbf{if} exceeded, \Return traj
        \State Add $(x_1,\ldots,x_d + 1,\ldots,x_n)$ to $\mathcal{N}$
    \EndFor

    \State Evaluate $F(y)$ for all $y \in \mathcal{N}$
    \State $(x', f') \gets \arg\min_{y \in \mathcal{N}} F(y)$ \Comment{steepest descent}

    \If{$f' < f$}
        \State $x \gets x'$, $f \gets f'$
        \State Append $(x, f)$ to traj
    \Else
        \State \Return traj \Comment{local minimum reached}
    \EndIf
\EndFor

\State \Return traj

\end{algorithmic}
\end{algorithm}

\section{Pseudocode for HC-SHIFT (Section~\ref{sec:compression})}
\begin{algorithm}[H]
\caption{HC-SHIFT: Main Algorithm}
\label{alg:hc-shift-main}
\begin{algorithmic}[1]

\Require Fitness function $F$, start point $x$, dimension $n$, max iterations $k_{\max}$, time limit $t_{\max}$, patience $p$.
\State $x \gets$ initial point, $f \gets F(x)$
\State $\mathcal{A} \gets \{0,\dots,n-1\}$ \Comment{active dimensions}
\State $\sigma_d \gets 0$ for all $d$ \Comment{stagnation counters}
\State $\mathcal{CM} \gets$ CompressionManagerND($n$)

\If{$f < \epsilon$} \Return $x$ \Comment{early success}
\EndIf

\For{iteration $i = 1$ to $k_{\max}$}
    \If{$\mathcal{A} = \emptyset$ \textbf{or} time $> t_{\max}$}
        \State \Return best $x$
    \EndIf
    
    \If{$f < \epsilon$} \Return $x$
    \EndIf

    \State $x, f \gets$ \textsc{HillClimbLoop}($x, f, \mathcal{A}, \mathcal{CM}, \sigma, p$) \Comment{Alg. \ref{alg:hill-climb-loop}}
    
    \If{$f < \epsilon$} \Return $x$
    \EndIf

    \State $\mathcal{B} \gets$ \textsc{DetectBasins}($x, \mathcal{A}, \mathcal{CM}$) \Comment{Alg. \ref{alg:basin-detection}}
    
    \If{$\mathcal{B} = \emptyset$}
        \State \Return best $x$ \Comment{no compressible basin}
    \EndIf

    \State $x, f \gets$ \textsc{SelectRestartPoint}($x, \mathcal{B}$) \Comment{Alg. \ref{alg:restart-selection}}
    
    \If{$f < \epsilon$} \Return $x$
    \EndIf
\EndFor

\State \Return best $x$

\end{algorithmic}
\end{algorithm}

\begin{algorithm}[H]
\caption{\textsc{DetectBasins}: Basin Detection and Compression Update}
\label{alg:basin-detection}
\begin{algorithmic}[1]

\Require Current point $x$, active dimensions $\mathcal{A}$, compression manager $\mathcal{CM}$.
\Ensure Basin set $\mathcal{B}$ (each element: $(d, (b_\text{start}, b_\text{len}))$)

\State $\mathcal{B} \gets \emptyset$

\For{each dimension $d \in \mathcal{A}$}
    \State Check time limit; \textbf{if} exceeded, \Return $\mathcal{B}$
    
    \State $B_d \gets$ \textsc{Detect1DBasin}($x, d$) \Comment{bidirectional search}
    
    \If{$B_d \neq \emptyset$}
        \State Get fixed coordinates: $\text{fixed} \gets (x_i : i \neq d)$
        \State $\mathcal{CM}$.\textsc{UpdateDimension}($d$, fixed, $B_d$)
        \State Add $(d, B_d)$ to $\mathcal{B}$
    \EndIf
\EndFor

\State \Return $\mathcal{B}$

\end{algorithmic}
\end{algorithm}

\begin{algorithm}[H]
\caption{\textsc{SelectRestartPoint}: Basin Boundary Evaluation}
\label{alg:restart-selection}
\begin{algorithmic}[1]

\Require Current point $x$, basin set $\mathcal{B}$.
\Ensure Restart point $x_\text{restart}$, fitness $f_\text{restart}$

\State $\mathcal{R} \gets \emptyset$ \Comment{restart candidates}

\For{each $(d, (b_\text{start}, b_\text{len})) \in \mathcal{B}$}
    \State Check time limit; \textbf{if} exceeded, \Return $(x, F(x))$
    
    \State $b_\text{end} \gets b_\text{start} + b_\text{len} - 1$
    
    \State Create point $y^-$ by setting $x_d \gets b_\text{start} - 1$
    \State Add $(y^-, F(y^-))$ to $\mathcal{R}$
    
    \State Create point $y^+$ by setting $x_d \gets b_\text{end} + 1$
    \State Add $(y^+, F(y^+))$ to $\mathcal{R}$
\EndFor

\State $(x_\text{restart}, f_\text{restart}) \gets \arg\min_{(y, f_y) \in \mathcal{R}} f_y$

\State \Return $(x_\text{restart}, f_\text{restart})$

\end{algorithmic}
\end{algorithm}

\begin{algorithm}[H]
\caption{\textsc{HillClimbLoop}: Compressed-Space Hill Climbing}
\label{alg:hill-climb-loop}
\begin{algorithmic}[1]

\Require Current point $x$, fitness $f$, active dims $\mathcal{A}$, compression manager $\mathcal{CM}$, stagnation $\sigma$, patience $p$.
\Ensure Updated point $x$, fitness $f$

\State $\text{step\_count} \gets 0$
\Repeat
    \State Check time limit; \textbf{if} exceeded, \Return $(x, f)$
    
    \State $\mathcal{N} \gets$ \textsc{GenerateNeighbors}($x, \mathcal{A}, \mathcal{CM}$) \Comment{Alg. \ref{alg:generate-neighbors}}
    
    \State $x_\text{best}, f_\text{best} \gets x, f$
    \State $\mathcal{M} \gets \emptyset$ \Comment{meaningful dimensions}
    
    \For{each $(y, f_y, \mathcal{D}_y) \in \mathcal{N}$}
        \If{$f_y < f_\text{best}$}
            \State $x_\text{best}, f_\text{best} \gets y, f_y$
        \EndIf
        \If{$f_y \neq f$}
            \State $\mathcal{M} \gets \mathcal{M} \cup \mathcal{D}_y$
        \EndIf
    \EndFor

    \State Update stagnation: $\sigma_d \gets \begin{cases} 0 & \text{if } d \in \mathcal{M} \\ \sigma_d + 1 & \text{otherwise} \end{cases}$
    
    \State Deactivate: $\mathcal{A} \gets \mathcal{A} \setminus \{d : \sigma_d \geq p\}$

    \If{$f_\text{best} < f$}
        \State $x \gets x_\text{best}$, $f \gets f_\text{best}$
        \State $\text{step\_count} \gets \text{step\_count} + 1$
    \Else
        \State \textbf{break} \Comment{local minimum}
    \EndIf
\Until{$\text{step\_count} \geq \text{max\_steps}$}

\State \Return $(x, f)$

\end{algorithmic}
\end{algorithm}

\begin{algorithm}[H]
\caption{\textsc{GenerateNeighbors}: Compressed Neighbor Generation}
\label{alg:generate-neighbors}
\begin{algorithmic}[1]

\Require Current point $x$, active dimensions $\mathcal{A}$, compression manager $\mathcal{CM}$.
\Ensure Neighbor set $\mathcal{N}$ (each element: $(y, f_y, \mathcal{D}_y)$)

\State $\mathcal{N} \gets \emptyset$

\For{each $d \in \mathcal{A}$} \Comment{Axis-aligned neighbors}
    \State Get compression system $w_d$ for dimension $d$ (if exists)
    \If{$w_d$ exists}
        \State $z_d \gets w_d(x_d)$
        \State $n^- \gets w_d^{-1}(z_d-1)$, $n^+ \gets w_d^{-1}(z_d+1)$
    \Else
        \State $n^- \gets x_d - 1$, $n^+ \gets x_d + 1$
    \EndIf
    \State Create neighbors $y^-$ and $y^+$ by modifying $x$ at dimension $d$
    \State Add $(y^-, F(y^-), \{d\})$ and $(y^+, F(y^+), \{d\})$ to $\mathcal{N}$
\EndFor

\If{$|\mathcal{A}| \geq 2$} \Comment{Diagonal neighbors}
    \For{each pair $(d_1, d_2)$ in combinations of $\mathcal{A}$}
        \State Get compression systems $w_{d_1}$, $w_{d_2}$ (if exist)
        \State Compute neighbor values for $d_1$: $\{n_1^-, n_1^+\}$ (compressed or not)
        \State Compute neighbor values for $d_2$: $\{n_2^-, n_2^+\}$ (compressed or not)
        \For{$v_1 \in \{n_1^-, n_1^+\}$, $v_2 \in \{n_2^-, n_2^+\}$}
            \State Create diagonal neighbor $y$ by modifying $x$ at dims $d_1, d_2$
            \State Add $(y, F(y), \{d_1, d_2\})$ to $\mathcal{N}$
        \EndFor
    \EndFor
\EndIf

\State \Return $\mathcal{N}$

\end{algorithmic}
\end{algorithm}

\newpage
\section{Complete Benchmark Results (~\ref{sec:exp1})}
\label{appendix:benchmark_res}
\begin{table*}[!htbp]
\centering
\small
\begin{tabular}{l|cc|cc|cc}
\toprule
\textbf{Benchmark}
& \multicolumn{2}{c|}{\textbf{HC-SHIFT}}
& \multicolumn{2}{c|}{\textbf{HC}}
& \multicolumn{2}{c}{\textbf{GA}} \\
& Avg. Trials & Avg. Time (s) & Avg. Trials & Avg. Time (s) & Avg. Trials & Avg. Time (s) \\
\midrule
arbitrary1 & 1.00 & 0.008 & 1.00 & 0.006 & 18.42 & 0.188 \\
arbitrary2 & 1.00 & 0.448 & 1.29 & 0.004 & 5.86 & 0.149 \\
arbitrary3 & 1.00 & 0.009 & 1.00 & 0.008 & 20.00 & 0.141 \\
arbitrary4 & 1.00 & 0.006 & 1.38 & 0.006 & 9.00 & 0.162 \\
arbitrary5 & 1.00 & 0.007 & 1.00 & 0.007 & 6.50 & 0.119 \\
arbitrary6 & 1.17 & 1.243 & 1.33 & 0.004 & 1.67 & 0.106 \\
arbitrary7 & 1.00 & 0.634 & 1.67 & 0.007 & 4.00 & 0.163 \\
arbitrary8 & 17.67 & 3.340 & 4583.50 & 3.337 & 4488.67 & 3.496 \\
arbitrary9 & 1.00 & 0.011 & 1.11 & 0.005 & 2.00 & 0.143 \\
arbitrary10 & 1.00 & 0.013 & 1.12 & 0.006 & 207.75 & 0.147 \\
collatz\_step & 1.00 & 0.015 & 6.00 & 0.008 & 9.67 & 0.086 \\
combined1 & 1.00 & 0.356 & 2.21 & 0.493 & 4417.86 & 0.476 \\
combined2 & 8.19 & 16.738 & 38292.06 & 12.536 & 247545.00 & 12.599 \\
count\_divisor\_1 & 3.08 & 5.108 & 9350.92 & 5.004 & 6038.17 & 5.103 \\
count\_divisor\_2 & 2.58 & 4.145 & 8312.83 & 6.674 & 28765.00 & 6.829 \\
derivative\_quadratic & 70.00 & 10.008 & 264995.00 & 10.002 & 7139.50 & 10.051 \\
digit\_sum & 1.00 & 0.006 & 1.50 & 0.006 & 9.80 & 0.082 \\
ex1 & 1.00 & 0.008 & 1.00 & 0.007 & 23.62 & 0.190 \\
ex2 & 1.00 & 0.019 & 1.00 & 0.019 & 2.33 & 0.133 \\
ex3 & 1.00 & 0.018 & 1.07 & 0.010 & 31.07 & 0.319 \\
ex4 & 1.00 & 0.047 & 1.00 & 0.024 & 5202.50 & 0.264 \\
ex5 & 2.33 & 2.161 & 5.00 & 0.227 & 69590.17 & 2.733 \\
ex6 & 1.00 & 0.009 & 2.17 & 0.010 & 64.83 & 0.041 \\
ex7 & 1.00 & 0.003 & 1.00 & 0.004 & 5.00 & 0.037 \\
mixed\_case & 1.00 & 1.672 & 13.67 & 5.782 & 141325.50 & 8.451 \\
needle1 & 1.00 & 0.293 & 1.80 & 0.824 & 2.60 & 0.279 \\
needle2 & 51.50 & 10.007 & 16628.75 & 10.002 & 259999.50 & 10.102 \\
needle\_case & 29.75 & 10.012 & 131752.50 & 10.002 & 176919.75 & 10.043 \\
parallel\_test & 1.12 & 0.288 & 4.31 & 0.010 & 29.58 & 0.057 \\
plateau1 & 1.00 & 0.101 & 1.00 & 0.037 & 28.38 & 0.163 \\
plateau2 & 3.75 & 0.572 & 106216.25 & 5.005 & 233190.88 & 5.056 \\
plateau3 & 1.00 & 0.053 & 1.00 & 0.021 & 129301.81 & 6.435 \\
plateau\_case & 8.50 & 4.167 & 86845.62 & 10.057 & 267704.38 & 12.584 \\
prime\_check & 1.00 & 0.024 & 2.75 & 0.005 & 5.58 & 0.052 \\
rugged1 & 1.00 & 0.028 & 1.00 & 0.017 & 2.06 & 0.261 \\
rugged2 & 53.25 & 10.010 & 17306.00 & 10.003 & 295836.75 & 10.063 \\
rugged\_case & 1.00 & 1.143 & 242500.00 & 10.002 & 16167.50 & 10.071 \\
triangle & 1.83 & 2.524 & 8.08 & 0.012 & 20.00 & 0.220 \\
\bottomrule
\end{tabular}
\caption{Per-benchmark average number of trials and total time (in seconds) for biased initialization, aggregated over all branches in each file.}
\label{tab:benchmark_biased}
\end{table*}

\end{document}